\documentclass[runningheads,envcountsame]{llncs}

\usepackage[T1]{fontenc}
\usepackage{graphicx}

\usepackage{amsmath}
\usepackage{amssymb}
\usepackage{booktabs}

\usepackage{algorithm}
\usepackage{algpseudocode}

\usepackage{url}
\let\oldunderscore\_
\renewcommand{\_}{\oldunderscore\allowbreak}
\usepackage[hidelinks,bookmarks=false]{hyperref}

\newenvironment{restatedlemma}[1]{%
  \par\medskip
  \noindent\textbf{Lemma~\ref{#1}.}\ \itshape
}{%
  \par\medskip
}

\spnewtheorem{observation}[theorem]{Observation}{\bfseries}{\itshape}

\begin{document}
\title{A rooted tree framework for linear time ultrabubble detection}

\author{Athanasios E. Zisis\inst{1}\orcidID{0009-0000-8691-0993}\\
P{\aa}l S{\ae}trom\inst{1,2,3}\orcidID{0000-0001-8142-7441}}
\authorrunning{A. E. Zisis and P. S{\ae}trom}
\institute{
Department of Computer Science,
Norwegian University of Science and Technology,
Sem S{\ae}lands vei 9, 7034 Trondheim, Norway\\
\email{athanas.zisis@gmail.com, pal.satrom@ntnu.no }
\and
Department of Clinical and Molecular Medicine,
Norwegian University of Science and Technology,
Erling Skjalgsons gate 1, 7491 Trondheim, Norway\\
%\email{pal.satrom@ntnu.no}
\and
Sentral Stab,
St. Olavs Hospital HF,
7006 Trondheim, Norway
}

\maketitle              
\begin{abstract}
Pangenomics uses graphs to show genetic differences within or between species. In these graphs, a path can represent one genome, while regions with different paths show genetic variation. Biedged graphs use black edges for sequences and grey edges for links between them. Snarls are minimal subgraphs of a biedged graph that are separated from the rest of the graph by removing two black edges. Ultrabubbles are minimal acyclic and tip-free snarls and thus are important variant structures because they have finite paths and lack dead ends. 

In our previous work, we showed that in linear time every bidirected graph can be transformed to a rooted biedged bipartite one, and that in these graphs, ultrabubbles can be enumerated with a lowest common ancestor (LCA)-based method in $O(Kn)$ time, where $n$ and $K$ are the number of nodes and given snarls, respectively, of the graph. 

Here, we present a series of practical and theoretical improvements to our previous LCA-based approach. First, we present a hybrid method that selects between the LCA-based method and the naive approach for evaluating a snarl, depending on the size of the snarl in relation to the number of tips and cycle-closing nodes in the graph. Second, by using the theoretical framework from our previous paper, we show that all ultrabubbles can be found in $O(n+m+K)$ time, where $m$ is the number of edges, by traversing the breadth-first search (BFS) tree of the biedged bipartite graph. Third, we show that any two snarls that are candidate ultrabubbles and share a frontier node cannot be ultrabubbles; 
the resulting set of snarls is compatible, bound by $n$, and defines exclusive families of nested snarls.

We combine these three results into six methods and present benchmarking results that illustrate how the above improvements affect practical run-times for identifying ultrabubbles.

\keywords{ultrabubbles \and snarls\and biedged graphs \and variation graphs \and pangenomics}
\end{abstract}

\section{Introduction}
\label{sec:introduction}

Variation graphs are important data structures in modern genomics because they encode both the genome and the genetic variation within a population. Genetic variation can be recognized as bubble-like structures in the graph: subgraphs that have more than one path between a pair of distinct nodes, the source and the sink. Among such bubbles, ultrabubbles are of particular interest as they are minimal acyclic subgraphs with a single source and a single sink, and thereby represent a finite number of genetic variants within a distinct part of the corresponding genome.

This paper is an improvement of our previous work based on lowest common ancestor (LCA) queries for ultrabubble enumeration \cite{zisis2026ultrabubble} in a biedged bipartite graph representation of variation graphs.

This previous work described a linear time transformation of a bidirected variation graph into a biedged bipartite graph $B$, where sequences are encoded along special black edges. Then, given the set of minimal 2-black-edge-connected subgraphs (so-called snarls) of $B$, 
we presented an LCA-based approach to enumerate the ultrabubbles of $B$ by answering LCA queries in a breadth-first search (BFS) tree $B_t$ of $B$.

As answering an LCA query takes constant time, the algorithm, which we here call UltraLCA, has 
an $O(Kn)$ complexity, improving the naive algorithm's $O(K(n+m))$ \cite{paten2018superbubbles}, where $K$, $n$, and $m$ are the number of snarls, nodes, and edges of $B$, respectively. 

Recently, three linear time approaches were introduced for ultrabubble enumeration. One approach used a specific traversal in certain bidirected graphs to run directed graph algorithms and then executed a weak superbubble enumeration algorithm to enumerate the ultrabubbles \cite{harviainen2026scalable}. The second approach is a tool that computes in linear time both snarls and ultrabubbles directly in bidirected graphs by using an SPQR tree framework \cite{sena2026identifying}. The third approach transforms a bidirected graph into a directed graph using the graph doubling technique, and then enumerates the ultrabubbles by computing the weak superbubbles in this doubled directed graph \cite{schmidt2026powergraphdoublingcomputing}.

Whereas the worst case run time of the UltraLCA algorithm is $O(Kn)$, its practical run time depends on the number of cycle-closing nodes and nodes with no outgoing edges (so-called tips) in the graph. This is because UltraLCA runs, for each snarl, two LCA queries for each tip and cycle-closing node. 

In this paper, we significantly improve the performance of UltraLCA. We also give a linear time of $O(n+m+K)$ approach for enumerating ultrabubbles, taking advantage of the same LCA framework and theory used in \cite{zisis2026ultrabubble}. Specifically, we first describe an approach that uses a sweep-scan from the root to the end of the BFS tree $B_t$ of the graph $B$. We then show how this sweep-scan approach can be combined with the LCA framework of \cite{zisis2026ultrabubble} to achieve a linear-time $O(n+m+K)$ complexity for enumerating ultrabubbles.

We also provide new hybrid algorithms that combine different algorithmic strengths, and nested family algorithms that take advantage of the nested structure of snarls.

As the linear time LCA-based approach for enumerating ultrabubbles presented here is different from the other linear time approaches \cite{harviainen2026scalable,sena2026identifying,schmidt2026powergraphdoublingcomputing}, we expect these methods to show different pros and cons in different practical situations.

The rest of this paper is structured as follows. Section~\ref{subsec:preliminaries} provides the preliminaries, Section~\ref{sec:algorithms} presents the algorithms, Section~\ref{sec:methods} describes the materials and methods, Section~\ref{sec:results} reports the results, Section~\ref{sec:discussion} discusses the results and concludes. Additional details on the preprocessing of the proposed algorithms and omitted proofs are given in the %Appendix~\ref{sec:appendix}.
Appendices A and B.

\subsection{Preliminaries}
\label{subsec:preliminaries}
Since the present paper is an improvement of \cite{zisis2026ultrabubble} we recall only the definitions and results needed here.

A bidirected graph where nodes represent homologous DNA sequence regions and edges their adjacencies is called a \textit{variation graph} \cite{paten2018superbubbles}.%, \cite{zisis2026ultrabubble}.

A \textit{biedged graph} is constructed from a variation graph
by splitting each of the nodes $v$ in two, $v_{left}, v_{right}$, and connecting them with a black edge. The variation graph's original edges, which were connected to either the left or right side of $v$, are kept and colored gray \cite{paten2018superbubbles}.

A \textit{snarl} \cite{paten2018superbubbles} $x,y$ is a minimal subgraph of a biedged graph that satisfies the following conditions:
\begin{itemize}
    \item \text{Separability:} After removing the distinct black edges incident to $x$ and $y$, the graph is split into disconnected components. Both $x$ and $y$ belong to a common component $C$ while the opposite endpoints of the corresponding black edges, denoted by $x'$ and $y'$ respectively, do not belong to $C$. 
    \item \text{Minimality:} There is no other black edge $(v,v')$ in the snarl's subgraph such that both $(x,v)$ and $(v',y)$ satisfy the above separability criterion.
\end{itemize}

We define \textit{crossing snarls} as two or more snarls that share some nodes, are not nested to each other, and do not have any common frontier nodes.

A \textit{tip} in a biedged graph is a leaf node that has no incident gray edges. A tip represents a dead end of a biedged graph.

An \textit{ultrabubble} is a snarl without cycles or tips. 
Here, we assume that every given bidirected graph has been transformed in linear time as detailed in \cite{rautiainen2020graphaligner} and \cite{zisis2026ultrabubble} to a biedged bipartite one, which follows specific traversal rules as described below. Note that the above transformation maintains the connectivity, all path information, and cycles of the initial given bidirected graph and thus does not affect the snarl set of the graph \cite{rautiainen2020graphaligner}, \cite{zisis2026ultrabubble}. However, adding an artificial root, if needed, could alter the snarl set, while the ultrabubble set remains unaffected, as illustrated in \cite{zisis2026ultrabubble}. 

A biedged bipartite graph is a directed biedged graph that has two types of nodes: L (left) and R (right). All black edges go from an L node to an R node, whereas all grey edges go from an R node to an L node.

A traversal in a biedged graph alternates between black and grey edges \cite{paten2018superbubbles}. In this paper, with the transformation to a biedged bipartite graph we apply, and following the traversal rules described above, we define a path as a sequence of connected and non-repeating nodes that follows these traversal rules and a cycle as a closed path that starts and ends at the same node. Since these biedged bipartite graphs are isomorphic to directed graphs \cite{zisis2026ultrabubble}, these path and cycle definitions here correspond to those in directed graphs while following the biedged bipartite structure and the traversal rules described earlier.

We assume that every biedged graph we are dealing with has a working root, that is, a node that we can start from and be able to traverse all nodes of the graph following the traversal rules (edge directions) we explained above. If such a root does not exist, one can be constructed in linear time \cite{zisis2026ultrabubble}.
Given the above setup, it is clear that we have adapted a left-right orientation of the graph from the root to the end (sink). Among the  two nodes of a black edge,  the L node is the one closest to the root.

All graphs that we are dealing with in this paper have the above characteristics and from now on, we use the term biedged graph $B$ to mean a graph that has all the above characteristics; that is, $B$ has a root, a left-right orientation, an end (sink) node, and the specific edge directions mentioned above.

Given such a biedged graph $B$, the following three observations apply to snarls and ultrabubbles. First, only $R-L$ snarls –– that is, snarls $(x,y)$ where $x$ is an R node, $y$ is an L node, and $x$ is closest to the root –– can be ultrabubbles \cite{zisis2026ultrabubble}. Second, a depth first search (DFS) from the root in $B$ enumerates all tips and cycle-closing nodes in $B$; a cycle-closing node is part of a snarl iff the corresponding cycle is completely nested within the snarl's subgraph \cite{zisis2026ultrabubble}. Assuming that the set $ftip$ contains all tips and cycle-closing nodes in $B$, a snarl $(x,y)$ is therefore an ultrabubble iff it is an $R-L$ snarl and none of the nodes in $ftip$ are present within its subgraph. Third, a node $t$ is part of an $R-L$ snarl $(x, y)$ iff $LCA_{B_t}(x, t) = x$ and $LCA_{B_t}(y, t) \neq y$, where $LCA_{B_t}(x, t)$ is the LCA of $x$ and $t$ in the BFS tree $B_t$ of $B$ \cite{zisis2026ultrabubble}. 

These observations lead to an algorithm that takes as input $ftip$ and $K$, the set of snarls of $B$, and for each $R-L$ snarl $k \in K$, runs up to $2*|ftip|$ LCA queries to determine if $k$ is an ultrabubble \cite{zisis2026ultrabubble}. This algorithm, which we name UltraLCA here, has complexity $O(|K|\cdot|ftip|)$, which is $O(|K|\cdot n)$ in worst case, where $n$ is the number of nodes of $B$.

In the following, we use $sn_1$ and $sn_2$ to denote the frontier nodes $(x, y)$ of an $R-L$ snarl.

\section{Algorithms}
\label{sec:algorithms}

This section first briefly explains two existing algorithms for identifying ultrabubbles in a given set of snarls within a biedged graph, the naive algorithm \cite{paten2018superbubbles} and the LCA based approach introduced in \cite{zisis2026ultrabubble}, which we here call UltraLCA (ULCA).

Then we introduce six new algorithms that aim to improve the LCA-based method. We group these algorithms into Sweep, Hybrid, and Nested snarl families algorithms and present each algorithm group, including underlying theory and proofs, in a separate section. 

\subsection{Existing algorithms}
\label{subsec:existing}
\subsubsection{Naive}
\label{subsec:naive}
Given the set of snarls of a biedged graph $B$, the Naive algorithm uses a DFS to check each snarl's subgraph for the presence of tips or cycles. 
Consequently, the algorithm has $O(K(n+m))$ complexity, where $K$, $n$, and $m$ are the number of snarls, nodes, and edges in $B$, respectively \cite{paten2018superbubbles}. 

\subsubsection{UltraLCA}
\label{subsec:algorithm1}
The UltraLCA algorithm uses two constant time LCA queries to determine if a node in the $ftip$ set is located within an $R-L$ snarl, resulting in an $O(Kn)$ complexity (see Section~\ref{subsec:preliminaries}) \cite{zisis2026ultrabubble}. Compared with the Naive algorithm, UltraLCA was faster on real-world data unless the $ftip$ set was large \cite{zisis2026ultrabubble}.

\subsection{Sweep algorithms} 
\label{subsec:sweepfamily}

The main disadvantage of the UltraLCA algorithm is that each node in the $ftip$ set is processed for each $R-L$ snarl in $B$. However, the following observation suggests an approach where the $ftip$ set can be gradually reduced.

\begin{observation}\label{o:depth}
  For any given $R-L$ snarl $(x, y)$ in $B$, as $y$ can only be reached through $x$, $d(x) < d(y)$, where $d(x)$ is the shortest path from the root of $B$ to $x$  \cite{zisis2026ultrabubble}. Consequently, for any node $v \in B$, $v$ is outside $(x, y)$ if $d(v) < d(x)$. 
\end{observation}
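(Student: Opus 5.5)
The observation has two parts, $d(x)<d(y)$ and ``$d(v)<d(x)$ implies $v$ is outside $(x,y)$''. I will prove the second part as the key claim and read the first as a special case of it.

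\textbf{Separation from the root.} Let $(x,y)$ be an $R$--$L$ snarl. By separability, removing the black edges $(x',x)$ and $(y,y')$ isolates a component $C$ that contains $x$ and $y$ but neither $x'$ nor $y'$. I first place the root $r$ outside $C$. If the root had been inside the snarl, the left--right orientation that makes $x$ the frontier closest to the root would fail. Alternatively, this is the convention of \cite{zisis2026ultrabubble} that $R$--$L$ snarls open on the root side at $x$. Now take any path from $r$ that enters $C$. Paths follow the traversal rules, so black edges run $L\to R$ and grey edges run $R\to L$. Since $x$ is an $R$ node, the black edge $(x',x)$ points into $x$. Since $y$ is an $L$ node, the black edge $(y,y')$ points out of $y$. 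The only edges connecting $C$ to the rest of $B$ are these two black edges, so the only way to enter $C$ is through $x$.

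\textbf{Depth comparison.} Let $v$ be any node of $C$ other than $x$, including $y$. A shortest path from $r$ to $v$ must pass through $x$, and it can only reach $v$ after $x$. Because $v\neq x$, the part of the path after $x$ contains at least one edge. Hence $d(v)\ge d(x)+1>d(x)$. Taking $v=y$ gives $d(x)<d(y)$. Here $x\neq y$, because they are the distinct frontier nodes of the snarl.

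\textbf{Contrapositive.} If $v$ lies in the snarl, then either $v=x$ and $d(v)=d(x)$, or $d(v)>d(x)$ by the previous step. So $d(v)<d(x)$ forces $v$ to lie outside $(x,y)$.

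The main obstacle is rigorously justifying that the root lies outside $C$ and that entry into $C$ happens only at $x$. Entry only through $x$ depends on the edge directions, since a directed path cannot come into $C$ along the outgoing black edge at $y$. My first attempt would appeal to the root and orientation conventions of \cite{zisis2026ultrabubble} for $R$--$L$ snarls. If that appeal is insufficient, I would argue directly that a root inside $C$ could not reach $x'$. It could not leave $C$ through $(x',x)$, whose direction points into $C$. Any exit through $(y,y')$ would contradict the separation of $x'$ from $C$ on the reachable side.
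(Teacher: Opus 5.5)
Your overall route is the one the paper intends. The paper gives no separate proof, only the clause that $y$ (and, for the second sentence, every node of the snarl) ``can only be reached through $x$''. Your cut argument makes that clause precise: $C$ is joined to the rest of $B$ only by $x'\to x$, which points in, and $y\to y'$, which points out. The shortest-path comparison then follows as you wrote it.

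The one real gap is the step you flag yourself: placing the root outside $C$. Your first justification (``the orientation would fail'') points the right way but is only asserted. Your direct fallback is invalid. A root inside $C$ can leave through $y\to y'$, walk outside to $x'$, and re-enter via $x'\to x$. That walk uses the deleted edge $(y,y')$, so it says nothing about the components of $B$ with both black edges removed. Separability is also fully compatible with $x'$ and $y'$ lying in one outer component.

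Separability alone cannot exclude $r\in C$. Take the bubble $s_R\to p_L\to p_R\to t_L$, $s_R\to q_L\to q_R\to t_L$, close it into a cycle with $t_L\to t_R\to s_L\to s_R$, and root it at $p_L$. Then $(s_R,t_L)$ is separable and minimal, yet $p_L\in C$ and $d(p_L)=0<d(s_R)=5$. The observation fails there precisely because $t_L$ is closer to the root, so this is not an $R-L$ snarl.

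The missing ingredient is the orientation clause of the $R-L$ definition combined with the in-degree of $R$ nodes. Grey edges leave $R$ nodes, and every node is incident to exactly one black edge. So the only edge entering $x$ is $x'\to x$, with $x'\notin C$.

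Hence, if $r\in C$ and $r\neq x$, every path from $r$ to $x$ must first exit $C$, which is possible only through $y\to y'$. This gives $d(y)<d(x)$, contradicting that $x$ is the frontier node closest to the root. If $r=x$ your conclusion holds trivially, so the correct dichotomy is ``$r\notin C$ or $r=x$'', not $r\notin C$.

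This step needs only the weak inequality $d(x)\le d(y)$ supplied by the definition. Deriving the strict $d(x)<d(y)$ afterwards, as you do, is therefore not circular. With this step inserted, the rest of your argument goes through as written.
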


More specifically, given a DFS traversal of the BFS tree $B_t$ of $B$ and a node $v \in B$, let $t_d(v)$ and $t_f(v)$ be the discovery and finishing times, respectively, for $v$ in the DFS traversal. The following two observations then hold for any given $R-L$ snarl $(x, y)$ in $B$ and any node $v \in B$:

\begin{observation}\label{o:discovery}
   If $t_d(v) < t_d(x)$ then $v$ is outside $(x, y)$.
\end{observation}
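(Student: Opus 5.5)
The plan is to reduce the claim to the third observation of the preliminaries: a node $t$ lies in an $R-L$ snarl $(x,y)$ iff $LCA_{B_t}(x,t)=x$ and $LCA_{B_t}(y,t)\neq y$. I will use only the necessary direction. If $v$ is part of $(x,y)$, then $LCA_{B_t}(x,v)=x$, which means $v$ is a descendant of $x$ in the BFS tree $B_t$ (possibly $v=x$). So it suffices to show that $t_d(v)<t_d(x)$ rules out $v$ being a descendant of $x$ in $B_t$.

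For this I will invoke the parenthesis (subtree-contiguity) property of a DFS traversal of a rooted tree. When the DFS of $B_t$ enters $x$ at time $t_d(x)$, it visits every node of the subtree of $x$ before finishing $x$ at time $t_f(x)$. It visits no node of that subtree earlier, because in a tree the only way into the subtree of $x$ is through $x$ itself. Hence every descendant $u$ of $x$, including $x$, satisfies $t_d(x)\le t_d(u)\le t_f(x)$. The contrapositive gives the step we need: if $t_d(v)<t_d(x)$, then $v$ is neither $x$ nor a descendant of $x$. Consequently $LCA_{B_t}(x,v)\neq x$.

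Combining the two steps, the first conjunct of the membership criterion fails, so $v$ is not part of the subgraph of $(x,y)$, i.e.\ $v$ is outside $(x,y)$.

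The main point to be careful about is that the DFS here runs on the tree $B_t$, not on $B$. The contiguity argument relies on $B_t$ being a tree, where the unique path from the root to any descendant of $x$ passes through $x$; it would not hold for a DFS on $B$ with cross edges. I also need to check that this argument is independent of Observation~\ref{o:depth}. BFS depth and DFS discovery order on $B_t$ are different orders, so Observation~\ref{o:depth} cannot simply be reused, and the LCA criterion is the correct tool.
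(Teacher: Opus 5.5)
Your argument is correct, but it is not the route the paper takes. The paper disposes of this observation in one line, saying it follows from Observation~\ref{o:depth}. That is the BFS-depth fact that $v$ is outside $(x,y)$ whenever $d(v)<d(x)$, because the snarl can only be reached through $x$. You instead combine the necessary direction of the LCA membership criterion ($v$ in the snarl forces $v$ into the $B_t$-subtree of $x$) with the interval property of a DFS on a tree.

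Your caution about reusing Observation~\ref{o:depth} is well founded. When $v$ is a $B_t$-ancestor of $x$, $t_d(v)<t_d(x)$ does force $d(v)<d(x)$. But a node in an earlier-explored sibling subtree can be deeper than $x$ and still be discovered first, and the depth statement by itself says nothing about such a node. What actually makes the paper's one-liner work is the stronger content hidden in the phrase ``can only be reached through $x$''. Namely, $x$ lies on every root path to every node of the snarl, so the whole snarl sits in the subtree of $x$ in $B_t$; this is exactly what the LCA criterion packages.

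Your proof makes that step explicit. As a bonus, the same interval fact, $t_d(x)\le t_d(u)\le t_f(x)$ for every descendant $u$ of $x$, gives Observation~\ref{o:finish} at once, without needing $t_f(x)>t_f(y)$. Your route does lean on the LCA characterization quoted from the earlier paper. A self-contained version of the paper's route would instead argue the domination directly: the only edge into the R node $x$ is the black edge from $x'$, and the black edge at the L node $y$ points out of the snarl. By separability, every directed path from the root into the snarl must therefore enter through $x$.
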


\begin{observation}\label{o:finish}
   If $t_d(v) > t_f(x)$ then $v$ is outside $(x, y)$.
\end{observation}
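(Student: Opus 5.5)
The idea is to reduce Observation~\ref{o:finish} to the LCA characterization recalled in the preliminaries. That characterization says a node $t$ lies in the $R-L$ snarl $(x,y)$ iff $LCA_{B_t}(x,t)=x$ and $LCA_{B_t}(y,t)\neq y$. So it suffices to show that $t_d(v)>t_f(x)$ forces $LCA_{B_t}(x,v)\neq x$, i.e.\ that $v$ is not a descendant of $x$ in the BFS tree $B_t$. The second condition of the characterization is then irrelevant, since failing the first already places $v$ outside the snarl.

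The key step is the standard parenthesis property of DFS on a rooted tree. During a DFS traversal of $B_t$, $v$ is a descendant of $x$ (with $v=x$ allowed) iff $t_d(x)\le t_d(v)<t_f(x)$. The argument is the usual one. Once $x$ is discovered, the traversal visits exactly the subtree of $x$ before finishing $x$. Hence every descendant is discovered in the interval $[t_d(x),t_f(x))$. Conversely, any node discovered after $t_f(x)$ belongs to a part of the tree explored after the subtree of $x$ has been completed, so it is not in that subtree. Therefore $t_d(v)>t_f(x)$ implies that $v$ is not in the subtree of $x$. Consequently $LCA_{B_t}(x,v)$ is a proper ancestor of $x$, hence $\neq x$.

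Applying the characterization with $t=v$ then gives that $v$ is not in $(x,y)$, which proves the observation. The argument mirrors that of Observation~\ref{o:discovery}, which covers the other side of the DFS interval of $x$. I could also add the remark that the only nodes that survive both tests are those in the subtree of $x$, which is what the sweep exploits.

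The main obstacle is to justify that the LCA characterization from \cite{zisis2026ultrabubble} really requires every snarl node to lie in the $B_t$-subtree of $x$. Put differently, we need that no node of the snarl can be reached in the BFS tree except through $x$. This is the content of the condition ``$LCA_{B_t}(x,t)=x$'', and I would cite it directly rather than reprove it. If a self-contained justification were wanted, I would argue it from separability. Every root-to-$v$ path into the snarl must enter through the black edge $(x',x)$, because removing it disconnects $x'$ from the snarl's component. In particular the BFS tree path to $v$ passes through $x$, so $v$ is a $B_t$-descendant of $x$.
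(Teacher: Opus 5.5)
Your proof is correct and takes essentially the paper's route: the paper justifies the observation in one line from DFS properties of $R-L$ snarls (citing $t_f(x) > t_f(y)$), i.e., the snarl lies inside the DFS interval of $x$ in $B_t$, and you argue the same thing via the parenthesis property. Your version is more explicit, and slightly tighter, because invoking the LCA criterion gives that every node of the snarl, not just $y$, is a $B_t$-descendant of $x$, which is exactly the fact the interval argument needs.
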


Observation~\ref{o:discovery} follows from Observation~\ref{o:depth}; Observation~\ref{o:finish} further follows from the properties of DFS and $R-L$ snarls, as $t_f(x) > t_f(y)$ for any $R-L$ snarl \cite{zisis2026ultrabubble}. Together, these observations suggest that the candidate snarls and the nodes in $ftip$ should be processed in order of $t_d$, progressively removing from $ftip$ the nodes that are guaranteed to be outside the remaining snarls. 

\begin{algorithm}[t]
\caption{SWEEP}\label{alg:sweep}
\begin{algorithmic}[1]
\Require The set $S_{RL}$ of $R-L$ snarls in $B$, the set $ftip$, and the BFS tree $B_t$ of $B$.
\Require $t_d(v)$: the discovery time of each node $v \in B$ in a DFS search of $B_t$.
\Require $t_f(v)$: the finishing time of each node $v \in B$ in a DFS search of $B_t$.
\Require $LCA_{B_t}(a,b)$: Lowest common ancestor of $a$ and $b$ in $B_t$.

\State $U \gets \emptyset$
\State $T \gets ftip$
\For{each $(x,y) \in S_{RL}$ in order of $t_d(x)$}
    \State $\mathit{F} \gets \textbf{true}$
    \For{each $t \in T$}
        \If{$t_d(t) < t_d(x)$}
            \State $T \gets T\setminus t$
        \ElsIf{$t_d(t) < t_f(x)$ and $LCA_{B_t}(t,x) = x$ and $LCA_{B_t}(t,y) \ne y$}
            \State $\mathit{F} \gets \textbf{false}$
            \State \textbf{break}
        \EndIf
    \EndFor
    \If{$\mathit{F} = \textbf{true}$}
        \State $U \gets U \cup \{(x,y)\}$
    \EndIf
\EndFor
\State \textbf{return} $U$
\end{algorithmic}
\end{algorithm}

\begin{figure}[ht!]
\centering
\includegraphics[width=\textwidth]{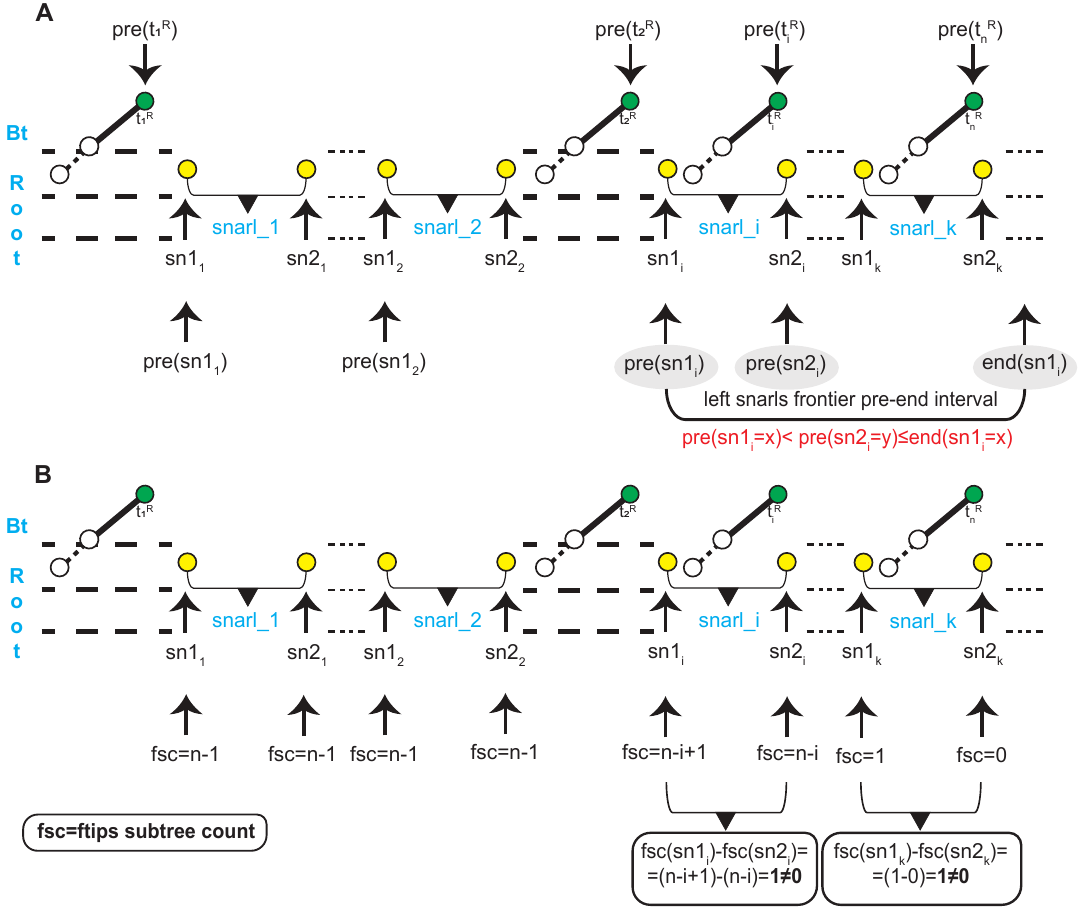}
\caption{A DFS traversal at the BFS tree $B_t$ of the biedged graph $B$, computes the preorder index ("pre", or discovery time $t_d$) and finishing position of the subtree ("end", or finishing time $t_f$) of every node of the graph. Dashed lines indicate omitted parts of $B_t$ between the illustrated structures. \textbf{(A)} %For every node $x$ of $B_t$, its preorder position pre($x$) ($t_d(x)$) and  finishing position end($x$) ($t_f(x)$), is computed. 
By the $R-L$ snarl definition, for each $R-L$ snarl $(sn_1,sn_2)$, \(pre(sn_1)<pre(sn_2) \leq end(sn_1)\).
\textbf{(B)} The $ftip$ subtree count (fsc) of a node $x$ is the number of nodes in $ftip$ that lie in the interval of (pre($x$),end($x$)). For a snarl $(sn_1,sn_2)$, if the subtraction \( fsc(sn_1)-fsc(sn_2) \neq 0\), then at least one node from the $ftip$ set lies between the snarl's frontier nodes. Consequently, that snarl cannot be an ultrabubble.}
\label{fig:sweep}
\end{figure}

The SWEEP (SW) algorithm (Algorithm~\ref{alg:sweep}) implements this ordered processing and uses the LCA queries of the UltraLCA algorithm only for the nodes in $ftip$ that reside in the time interval defined by $[t_d(x), t_f(x)]$. SWEEP complexity is still $O(Kn)$, as each snarl may scan up to $|ftip|$ active points, where $|ftip| \leq n$, and $K$ and $n$ are the number of snarls and nodes of $B$, respectively. 

Given $c(t)$, the number of nodes in $ftip$ discovered during or after time point $t$ in a DFS of $B_t$, this ordered processing can be further improved through the following two observations:
\begin{observation}\label{o:cumsum}
  Given a node $v$ in $B_t$, the number of nodes from $ftip$ that are present in the subtree of $v$ in $B_t$ is $fsc(v) = c(t_d(v))-c(t_f(v))$.
\end{observation}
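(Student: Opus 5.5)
The plan is to reduce the statement to the parenthesis (interval) property of depth-first search applied to the tree $B_t$. I would fix the timestamp convention used in the paper: a single global clock that is incremented at every discovery and at every finish. Then every time value is either the discovery time of exactly one node or the finishing time of exactly one node, never both.

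\textbf{Step 1.} By definition, $c(t)$ counts the nodes $u\in ftip$ with $t_d(u)\ge t$. Since $c$ is non-increasing in $t$ and $t_d(v)\le t_f(v)$, the difference $c(t_d(v))-c(t_f(v))$ counts exactly the nodes $u\in ftip$ with $t_d(v)\le t_d(u)<t_f(v)$. The statement therefore reduces to the claim that, for every node $u$ of $B_t$,
\[
u \text{ lies in the subtree of } v \iff t_d(v)\le t_d(u)<t_f(v).
\]

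\textbf{Step 2.} I would prove this claim using the parenthesis theorem for DFS. Since $B_t$ is a tree, the DFS from the root visits every node, and the DFS tree it produces is $B_t$ itself.

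For the forward direction, take a descendant $u$ of $v$. It is discovered after $v$ is discovered, or $u=v$ itself, which gives equality. It also finishes before $v$ finishes, because DFS cannot finish $v$ while a node in $v$'s subtree is still undiscovered. Hence $t_d(v)\le t_d(u)<t_f(u)\le t_f(v)$.

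For the converse, take $u$ not in the subtree of $v$. By the parenthesis theorem, the intervals $[t_d(u),t_f(u)]$ and $[t_d(v),t_f(v)]$ are either disjoint or $[t_d(v),t_f(v)]$ is nested inside $[t_d(u),t_f(u)]$. The latter case occurs when $u$ is an ancestor of $v$, and then $t_d(u)<t_d(v)$. In the disjoint case, $t_d(u)$ lies either before $t_d(v)$ or after $t_f(v)$. In every case $t_d(u)\notin[t_d(v),t_f(v))$.

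\textbf{Step 3.} Restricting the equivalence of Step 2 to the nodes in $ftip$ gives $fsc(v)=c(t_d(v))-c(t_f(v))$.

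The main obstacle is a boundary case rather than any deep step: it must be clear that no node has $t_d(u)=t_f(v)$. This is why I insist on a shared clock, under which discovery and finishing times are distinct values.

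If instead the paper's convention is the one in Fig.~\ref{fig:sweep}, with pre-order indices and $\mathrm{end}(v)$ equal to one past the largest pre-order index in $v$'s subtree, the same argument goes through. The reason is that a subtree occupies a contiguous block of pre-order indices, namely $[\mathrm{pre}(v),\mathrm{end}(v))$. Under that convention I would state this contiguity as the analogue of the parenthesis property. If $\mathrm{end}$ is defined inclusively, I would shift it by one before applying Step 1.
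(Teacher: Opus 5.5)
Your argument is correct and is essentially the reasoning the paper relies on. The paper states this observation without a separate proof, implicitly using the fact (Fig.~\ref{fig:sweep}B) that the subtree of $v$ in $B_t$ is exactly the set of nodes whose discovery times fall in the DFS interval of $v$, which is what your parenthesis-theorem step establishes. The paper's actual convention is a preorder index with a subtree end index (Appendix~\ref{sec:LCApreordertabs}), so your final paragraph is the version that matches it directly. Your insistence on an exclusive end is also well placed: the last node of a subtree in preorder is a leaf of $B_t$, and tips are leaves of $B_t$, so with an inclusive end the formula could miss a tip.
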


\begin{observation}\label{o:interval}
  For any given $R-L$ snarl $(x, y)$ in $B$, the number of nodes from $ftip$ that are present in $(x, y)$ is $fsc(x) - fsc(y)$ (Figure~\ref{fig:sweep}).
\end{observation}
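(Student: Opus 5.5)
The plan is to reduce Observation~\ref{o:interval} to Observation~\ref{o:cumsum} by establishing the set identity
\[
\{t \in ftip : t \in (x,y)\} = \{t \in ftip : t \in \mathrm{sub}_{B_t}(x)\} \setminus \{t \in ftip : t \in \mathrm{sub}_{B_t}(y)\},
\]
where $\mathrm{sub}_{B_t}(v)$ denotes the subtree of $v$ in the BFS tree $B_t$. Two ingredients are needed: a containment between the subtrees, and a translation of ``lies in the subtree'' into the LCA language of the preliminaries.

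\textbf{Step 1: $y$ lies in the subtree of $x$.} Since $(x,y)$ is an $R-L$ snarl, $y$ can only be reached from the root through $x$ (Observation~\ref{o:depth}). Hence every root-to-$y$ path in $B$ passes through $x$. In particular the tree path in $B_t$ does, so $x$ is an ancestor of $y$ in $B_t$. This gives $\mathrm{sub}_{B_t}(y) \subseteq \mathrm{sub}_{B_t}(x)$, matching the stated ordering $t_d(x) < t_d(y) \le t_f(y) \le t_f(x)$ (Figure~\ref{fig:sweep}).

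\textbf{Step 2: the set identity.} For any node $t$, $LCA_{B_t}(x,t)=x$ holds iff $t \in \mathrm{sub}_{B_t}(x)$, and $LCA_{B_t}(y,t)\neq y$ holds iff $t \notin \mathrm{sub}_{B_t}(y)$. By the third observation of the preliminaries, $t \in (x,y)$ iff both conditions hold. Restricting to $t \in ftip$ gives exactly the set identity above.

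\textbf{Step 3: counting.} By Step 1 the second set is contained in the first, so
\[
|\{t\in ftip: t\in(x,y)\}| = fsc(x) - fsc(y).
\]
Each $fsc$ value is given by Observation~\ref{o:cumsum} as the $c$-difference over the DFS interval $[t_d(v), t_f(v)]$. Because the interval of $y$ is nested in that of $x$, the subtraction is a count over the set difference of the two intervals, with no inclusion--exclusion correction needed.

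\textbf{Main obstacle.} The delicate point is Step 1: deducing tree ancestry from ``reachable only through $x$''. I would argue it by dominance: any path from the root to $y$ must contain $x$, and the BFS tree path is such a path. Then $x$ appears on the root-to-$y$ tree path, which is exactly the statement that $x$ is an ancestor of $y$ in $B_t$. A secondary check is the boundary convention for $y$ itself. The frontier node $y$ is in $\mathrm{sub}_{B_t}(y)$ and is therefore excluded from the count. This is consistent with the LCA criterion of the preliminaries, where $LCA_{B_t}(y,y)=y$, so $y$ is excluded there too.
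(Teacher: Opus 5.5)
Your proof is correct and matches the reasoning the paper relies on, via Figure~\ref{fig:sweep} and the LCA membership criterion from the preliminaries. Since $y$ lies in the subtree of $x$ in $B_t$, the $ftip$ nodes in $(x,y)$ are exactly those in the subtree of $x$ but not in the subtree of $y$, and this containment is what makes $fsc(x)-fsc(y)$ count that set difference, with $y$ itself excluded consistently since $LCA_{B_t}(y,y)=y$.
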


\begin{algorithm}[t]
\caption{SWEEP\_INTERVAL}\label{alg:sweep_interval}
\begin{algorithmic}[1]
\Require The set $S_{RL}$ of $R-L$ snarls in $B$, the set $ftip$, and the BFS tree $B_t$ of $B$.
\Require $t_d(v)$: the discovery time of $v$ in a DFS search of $B_t$.
\Require $t_f(v)$: the finishing time of $v$ in a DFS search of $B_t$.
\Require $c(t)$: the number of nodes in $ftip$ discovered during or after time point $t$ in a DFS of $B_t$.

\State $U \gets \emptyset$
\For{each $(x,y) \in S_{RL}$}
    \If{$(c(t_d(x))-c(t_f(x)))-(c(t_d(y))-c(t_f(y))) = 0$}
        \State $U \gets U \cup \{(x,y)\}$
    \EndIf
\EndFor
\State \textbf{return} $U$
\end{algorithmic}
\end{algorithm}

In the resulting SWEEP\_INTERVAL (SWI) algorithm (Algorithm~\ref{alg:sweep_interval}), the running time per snarl is constant. As the algorithm also depends on the BFS and DFS, which are $O(n+m)$, the total complexity of SWEEP\_INTERVAL is $O(n+m+K)$, where $n$, $m$, and $K$ are the number of nodes, edges, and snarls of $B$ .

\subsection{Hybrid algorithms}
\label{subsec:hybridfamily}
Hybrid algorithms use the idea of sending each $R-L$ snarl $(x,y)$ to the most appropriate, feasible, and meaningful algorithm for processing. In the next two subsections, we present hybrid algorithms that choose online which algorithm to process each snarl. The criterion for choosing which algorithm to use is the number of nodes and edges of each snarl in relation to the number of LCA queries needed by the UltraLCA approach to evaluate the snarl.

\subsubsection{Hybrid UltraLCA-Naive online}
\label{subsec:hybridalgo1}
HYBRID\_UltraLCA\_ONLINE (H-ULCA) is an online hybrid algorithm that decides for each $R-L$ snarl $(x,y)$ whether it will be processed by the Naive algorithm or UltraLCA. Specifically, any snarl $S$ should be processed by the LCA queries of UltraLCA, if $n_S + m_S > \alpha\cdot|ftip|$, where $n_S$ and $m_S$ are the number of nodes and edges in the snarl, respectively, and $\alpha$ is a constant. Importantly, as the size of each snarl is unknown, HYBRID\_UltraLCA\_ONLINE starts processing each snarl with the Naive algorithm, but switches to the UltraLCA approach if the $\alpha\cdot|ftip|$ threshold is reached. Empirical evaluations suggested $\alpha=1.7$ (data not shown). The complexity is $O(K(n+m))$, the same as the Naive algorithm.

\subsubsection{Hybrid Sweep-Naive online}
\label{subsec:hybridsweep}
HYBRID\_SWEEP\_ONLINE (H-SW) follows the same logic as in the previous Section, but diverts the snarl to the SWEEP algorithm instead of UltraLCA if $n_S + m_S > \alpha\cdot|ftip|$. Also here, the complexity is $O(K(n+m))$.

\subsection{Nested snarl families algorithms}
\label{subsec:nestedfamily}
Snarls can be nested such that their subgraphs share some of the same nodes and edges.
Here, the idea is to take advantage of the nested structure hierarchy of so-called compatible snarls to avoid unnecessary recalculations, as we explain in the following.

A set of snarls $G$ is called compatible if no two snarls of the set partially overlap; that is, their subgraphs are separate or one is contained within the other \cite{paten2018superbubbles}.

\begin{lemma}
\label{lem:overlap}
In the set $S_{RL}$ of $R-L$ snarls $(x,y)$ of a given rooted biedged graph $B$, no pair of crossing snarls can exist.
\end{lemma}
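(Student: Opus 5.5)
The plan is to argue by contradiction, using the LCA characterisation of $R-L$ snarl membership recalled in Section~\ref{subsec:preliminaries}: a node $t$ lies in $(x,y)$ iff $LCA_{B_t}(x,t)=x$ and $LCA_{B_t}(y,t)\neq y$. In other words, the node set of an $R-L$ snarl $(x,y)$ is exactly $\mathrm{sub}(x)\setminus \mathrm{sub}(y)$, where $\mathrm{sub}(v)$ denotes the subtree of $v$ in the BFS tree $B_t$ (and $y\in\mathrm{sub}(x)$, as also noted in Figure~\ref{fig:sweep}).

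Suppose $(x_1,y_1)$ and $(x_2,y_2)$ are crossing $R-L$ snarls. By definition they share a node, are not nested, and have no common frontier node. The steps are as follows.

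\emph{First}, since they share a node, $\mathrm{sub}(x_1)\cap\mathrm{sub}(x_2)\neq\emptyset$. Subtrees of a rooted tree are either disjoint or nested, so without loss of generality $x_2\in\mathrm{sub}(x_1)$. Moreover $x_2\neq x_1$, since the snarls share no frontier node.

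\emph{Second}, I split on the position of $x_2$ relative to $y_1$:
\begin{itemize}
\item If $x_2\in\mathrm{sub}(y_1)$, then $\mathrm{sub}(x_2)\subseteq\mathrm{sub}(y_1)$. The second snarl is then disjoint from the first, contradicting that they share a node.
\item Hence $x_2\in\mathrm{sub}(x_1)\setminus\mathrm{sub}(y_1)$, that is, $x_2$ is an internal node of $(x_1,y_1)$.
\end{itemize}

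\emph{Third}, I use separability of $(x_1,y_1)$. Removing its two black edges disconnects its component $C$ from the rest, so $x_2$, lying inside, has the black edge $(x_2',x_2)$ inside the snarl as well; here $x_2'\neq y_1$ and $x_2'\neq x_1$, since there are no shared frontiers. Now consider $y_2$. Either $y_2$ lies in $(x_1,y_1)$, or it does not.
\begin{itemize}
\item If it does, then $\mathrm{sub}(x_2)\setminus\mathrm{sub}(y_2)$ would have to stay within $\mathrm{sub}(x_1)\setminus\mathrm{sub}(y_1)$, giving nesting. I must rule out $\mathrm{sub}(x_2)\cap\mathrm{sub}(y_1)\not\subseteq\mathrm{sub}(y_2)$, i.e.\ a part of the second snarl escaping through $y_1$.
\item If it does not, the second snarl contains $x_2$ inside the first and $y_2$ outside it. Any path from $x_2$ to $y_2$ realising connectivity of the second snarl must then leave $C$ through the black edge at $y_1$. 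So $y_1$ and $y_1'$ lie in $(x_2,y_2)$, and $(y_1,y_1')$ is an interior black edge of $(x_2,y_2)$.
\end{itemize}

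The main obstacle is exploiting minimality in this last configuration. I would try to show that $(x_2,y_1)$ (the pair $x_2$ with the black edge at $y_1$) itself satisfies separability, whenever the second snarl passes through the black edge at $y_1$. Separating the black edges at $x_2$ and $y_1$ isolates $\mathrm{sub}(x_2)\setminus\mathrm{sub}(y_1)$: on one side this follows from the separation given by $x_1,y_1$, and on the other from the separation given by $x_2,y_2$. Together with $(y_1',y_2)$ being separable, this contradicts minimality of $(x_2,y_2)$. A symmetric argument handles the escaping-through-$y_1$ subcase of the nested branch.

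If this separability bookkeeping becomes messy, my fallback is the standard fact from \cite{paten2018superbubbles} that two snarls that overlap without a shared boundary violate minimality of one of them. I would then use the tree structure only to fix which of the two snarls is the outer one.
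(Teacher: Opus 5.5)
\textbf{Verdict: your proof has a genuine gap, and the gap cannot be closed, because the lemma is false in exactly the subcase you deferred.}

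\textbf{What works.} Your reduction to tree sets (the node set of $(x,y)$ is $(\mathrm{sub}(x)\setminus\mathrm{sub}(y))\cup\{y\}$) is sound. So is your case $y_2\notin(x_1,y_1)$: the two separations make $(x_2,y_1)$ and $(y_1',y_2)$ separable, so $(y_1,y_1')$ violates minimality of $(x_2,y_2)$. This is the only configuration the paper's proof treats. Its WLOG puts $sn_3$ between $sn_1$ and $sn_2$ and $sn_4$ to the right of $sn_2$, and your argument for it is cleaner than the paper's path-based split.

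\textbf{The gap.} It is the subcase you defer to ``a symmetric argument'': $x_2,y_2$ both inside $(x_1,y_1)$, but $y_1\in\mathrm{sub}(x_2)\setminus\mathrm{sub}(y_2)$, so all of $\mathrm{sub}(y_1)$ lies in the second snarl. There is no symmetric argument. The natural witnesses are $(y_1,y_1')$ for the second snarl and $(x_2',x_2)$ for the first. They need $(x_2,y_1)$, respectively $(x_1,x_2')$, to be separable. Both fail as soon as some node of $\mathrm{sub}(y_2)\setminus\{y_2\}$ is adjacent to a node of $(x_1,y_1)$ outside $\mathrm{sub}(x_2)$.

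\textbf{Counterexample.} This configuration occurs. Take segments $r,a,p,u,w,z$ with gray edges $r_R\to a_L$, $a_R\to p_L$, $p_R\to u_L$, $u_R\to w_L$, $u_R\to z_L$, $w_R\to p_L$. The root is $r_L$ and $z_R$ is a tip.
\begin{itemize}
\item Cutting $(a_L,a_R)$ and $(z_L,z_R)$ leaves the component $\{a_R,p_L,p_R,u_L,u_R,w_L,w_R,z_L\}$. It is minimal: every interior black edge lies on the cycle $p\to u\to w\to p$, so cutting one never separates its two ends.
\item Cutting $(u_L,u_R)$ and $(w_L,w_R)$ leaves $\{u_R,w_L,z_L,z_R\}$, since $u_L$ stays attached to $p_R$ and $w_R$ to $p_L$. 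It is minimal: $(u_R,z_L)$ is not separable, because $u_R,w_L,w_R,p_L,p_R,u_L$ avoids the cut edges, and $z_R$ is isolated.
\end{itemize}
Both are $R-L$ snarls, since $d(a_R)=3<d(z_L)=8$ and $d(u_R)=7<d(w_L)=8$. They share $u_R,w_L,z_L$, neither contains the other, and they have no common frontier node. With $x_1=a_R$, $y_1=z_L$, $x_2=u_R$, $y_2=w_L$, this is exactly your escaping subcase.

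\textbf{Consequences.} The statement is false as written.
\begin{itemize}
\item Your fallback to \cite{paten2018superbubbles} cannot rescue it, because the example satisfies the definitions of Section~\ref{subsec:preliminaries} verbatim.
\item The paper's proof misses the case because neither labelling of the pair has $sn_4$ to the right of $sn_2$: $w_L$ and $z_L$ are siblings below $u_R$.
\item The $R-L$ snarls of this graph are exactly $(r_R,a_L)$, $(a_R,z_L)$, $(p_R,u_L)$, $(u_R,w_L)$, with eight distinct frontier nodes. So the $K_c$ of Theorem~\ref{thm:compatible} is not compatible here either.
\end{itemize}

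\textbf{What can be salvaged.} Your framework does prove that in the escaping subcase neither snarl is an ultrabubble.
\begin{itemize}
\item No out-edge leaves $\mathrm{sub}(y_1')$. It cannot leave the second snarl, whose only outgoing boundary edge is $y_2\to y_2'$. It cannot enter the first, whose only incoming boundary edge is $x_1'\to x_1$. So $\mathrm{sub}(y_1')$ contains a tip or a cycle.
\item From $y_2'$, which lies in the first snarl, you either reach $x_2'$, closing a cycle through $x_2$ and $y_2$ inside $(x_1,y_1)$, or stay trapped inside $(x_1,y_1)$. Either way the first snarl contains a tip or a cycle.
\end{itemize}
Your other cases are genuinely impossible, so crossing $R-L$ pairs arise only in this configuration. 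The correct replacement for the lemma is that neither member of a crossing pair is an ultrabubble.
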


\begin{lemma}
\label{lem:leftmost} 
Given the set $S_{RL}$ of $R-L$ snarls $(x,y)$ of a rooted biedged graph $B$, any snarls in $S_{RL}$ that share the leftmost frontier node cannot be ultrabubbles.
\end{lemma}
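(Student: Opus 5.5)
We argue by contradiction: assume two distinct snarls $(x,y_1)$ and $(x,y_2)$ in $S_{RL}$ share the leftmost frontier node $x$ (an R node), with $y_1 \neq y_2$ both L nodes, and suppose one of them, say $(x,y_1)$, is an ultrabubble, hence acyclic and tip-free. The aim is to show that such an ultrabubble already forces every node reachable from $x$ through the snarl's interior to pass through $y_1$, which contradicts $(x,y_2)$ also being a snarl.

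First we fix the structure of an $R-L$ snarl in the rooted orientation. By separability of $(x,y_1)$, removing the black edges at $x$ and $y_1$ leaves a component $C_1$ containing $x$ and $y_1$ but not $x'$ or $y_1'$. Since $x$ is an R node, its only outgoing edges are grey edges into $C_1$, so every path leaving $x$ stays in $C_1$ until it reaches $y_1$ and then crosses to $y_1'$. Since $(x,y_1)$ is an ultrabubble, $C_1$ has no tips and no cycles. Every maximal directed path from $x$ inside $C_1$ therefore cannot stop at a dead end or loop forever. By the traversal rules and finiteness, each such path ends at $y_1$, the only exit.

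Next we compare with $(x,y_2)$. Its component $C_2$ also contains $x$ together with all grey neighbours of $x$, because these are not separated by removing the black edges at $x$ and $y_2$. Hence $C_1$ and $C_2$ intersect, and neither is disjoint from the other. Lemma~\ref{lem:overlap} excludes crossing, and crossing would anyway require no common frontier node. The remaining options are nesting or partial overlap with a shared frontier.

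We handle the nested-type case first. If $y_2 \in C_1$, then $y_2$ is an interior black edge endpoint, and $(y_2, y_2')$ is a black edge inside $(x,y_1)$. We then check that $(x,y_2)$ and $(y_2',y_1)$ both satisfy separability. The first is given. For the second, every path from $x$ to $y_1$ passes through $y_2$, since $(x,y_2)$ separates $x$ from everything beyond $y_2'$ and $y_1$ is beyond it. This contradicts minimality of $(x,y_1)$. If instead $y_1 \in C_2$, the symmetric argument contradicts minimality of $(x,y_2)$.

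The remaining case, $y_2 \notin C_1$ and $y_1 \notin C_2$, is the main obstacle. Here we use the ultrabubble property. All paths from $x$ in $C_1$ terminate at $y_1$ without passing $y_2$. In the case $C_1 \subseteq C_2$, the node $y_1$ is an interior node of $C_2$, so $y_1 \in C_2$, which contradicts the case assumption. Otherwise some edge of $C_2$ leaves $C_1$. That edge can only do so through $y_1 \to y_1'$ or through $x \to x'$. The second is impossible within $C_2$, and the first puts $y_1 \in C_2$. Either way we reach a contradiction. I expect the delicate point to be justifying rigorously, via separability and the directed R/L structure, that components are closed under forward reachability, so that the containment argument goes through. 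I would try to extract that closure property as a small preliminary claim.
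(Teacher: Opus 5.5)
The step that fails is your treatment of the case $y_2\in C_1$. You claim that every path from $x$ to $y_1$ passes through $y_2$ because ``$y_1$ is beyond'' $y_2'$. Nothing justifies this, and it is false in exactly the configuration the lemma is about (Figure~\ref{fig:comp}B), where $y_2\in C_1$ and also $y_1\in C_2$. Concretely, take root $b_L$, black edges $b_L\to b_R$, $c_L\to c_R$, $d_L\to d_R$, $e_L\to e_R$, and grey edges $b_R\to c_L$, $b_R\to d_L$, $c_R\to e_L$. Both $(b_R,c_L)$ and $(b_R,d_L)$ are separable and minimal $R-L$ snarls; for example $(d_R,c_L)$ is not separable, so $(d_L,d_R)$ does not break minimality of $(b_R,c_L)$. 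Neither snarl is an ultrabubble, because of the tips $d_R$ and $e_R$. You never invoke the ultrabubble hypothesis in this case, so your argument would prove that two $R-L$ snarls can never share their leftmost frontier node. That is false, and it is why the lemma, unlike Lemma~\ref{lem:rightmost}, only concludes that such snarls are not ultrabubbles. Conversely, your ``main obstacle'' case is vacuous by connectivity alone. The only edges of $B$ leaving $C_1$ are $x'x$ and $y_1y_1'$, so $y_1\notin C_2$ already forces $C_2\subseteq C_1$ and hence $y_2\in C_1$. The one place you use acyclicity and tip-freeness is therefore a case that never occurs.

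To repair it, split on whether $y_1\in C_2$ and whether $y_2\in C_1$. If $y_1\notin C_2$, then $C_2\subseteq C_1$, and $y_2y_2'$ is the only edge joining $C_2$ to $C_1\setminus C_2$. So after deleting $y_2y_2'$ and $y_1y_1'$, the component of $y_2'$ is $C_1\setminus C_2$, which contains $y_1$ but neither $y_2$ nor $y_1'$. Thus $(x,y_2)$ and $(y_2',y_1)$ are both separable and minimality of $(x,y_1)$ fails; this is where your minimality idea is valid, and the case $y_2\notin C_1$ is symmetric. In the remaining case, $y_1\in C_2$ and $y_2\in C_1$, you must use the tool from your first paragraph, applied to paths starting at $y_2'$ rather than at $x$. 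Note that $y_2'\in C_1$, since its black edge is not deleted. If $C_1$ were acyclic and tip-free, a maximal forward path from $y_2'$ inside $C_1$ would have to end at $y_1$: every L node of $C_1$ other than $y_1$ keeps its outgoing black edge, and an R node without outgoing grey edges is a tip. That path avoids $y_2y_2'$ and $x'x$, so together with $y_1\in C_2$ it puts $y_2'$ into $C_2$, contradicting separability of $(x,y_2)$. The same argument starting from $y_1'\in C_2$ disposes of $(x,y_2)$. This is a precise version of the paper's proof, which argues in exactly this configuration. There, the branch through $y_2'$ is cut off and must end in a tip or cycle inside $(x,y_1)$, while $(x,y_2)$ contains $y_1y_1'$ and everything beyond it, including the end of the graph.
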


\begin{lemma}
\label{lem:rightmost}
Given the set $S_{RL}$ of $R-L$ snarls $(x,y)$ of a rooted biedged graph $B$, no two distinct snarls in $S_{RL}$ will share their rightmost frontier node $y$. 
\end{lemma}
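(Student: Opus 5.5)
We argue by contradiction, using the minimality condition in the definition of a snarl. Suppose two distinct $R-L$ snarls $(x_1,y)$ and $(x_2,y)$ in $S_{RL}$ share their rightmost frontier node $y$, with $x_1\neq x_2$. By Observation~\ref{o:depth}, $y$ can only be reached from the root through $x_1$, and likewise only through $x_2$. So every root-to-$y$ path passes through both $x_1$ and $x_2$. In the BFS tree $B_t$, both $x_1$ and $x_2$ are then ancestors of $y$. Hence they lie on a common root-to-$y$ tree path and are comparable. Without loss of generality $x_1$ is a proper ancestor of $x_2$, so $d(x_1)<d(x_2)<d(y)$.

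Next we show that $x_2$ lies inside $(x_1,y)$. Take the tree path from $x_1$ to $y$ in $B_t$. It starts inside the snarl $(x_1,y)$ and never crosses the black edges at $x_1$ or $y$ before reaching $y$, because $x_1$ is an $R$ node whose outside black edge $(x_1',x_1)$ points toward the root. So it stays in the component $C$ of $(x_1,y)$, and $x_2$ lies on it. The criterion of the preliminaries confirms this: $LCA_{B_t}(x_1,x_2)=x_1$, and $LCA_{B_t}(y,x_2)=x_2\neq y$. Since $x_2$ is an $R$ node, its black-edge partner $x_2'$ is an $L$ node with the black edge $(x_2',x_2)$ directed into $x_2$. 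Note that $x_2'$ is also inside $(x_1,y)$. Otherwise the only entry into $x_2$ would come from outside the component, yet $x_2$ is reachable from $x_1$ inside $C$.

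Now apply minimality to the larger snarl $(x_1,y)$, taking the black edge $(v,v')=(x_2',x_2)$. The pair $(x_2,y)$ satisfies separability, since it is itself a snarl. We must also check that $(x_1,x_2')$ satisfies separability. Remove the black edges at $x_1$ and $x_2$. Every path from $x_1$ into the part beyond $x_2$ must pass through $x_2$, because $(x_2,y)$ separates: its interior is reachable only through $x_2$, and it exits only through $y$, which in turn lies inside $(x_1,y)$. The component containing $x_1$ and $x_2'$ is therefore cut off from $x_1'$ and from $x_2$. This contradicts the minimality of $(x_1,y)$. Hence no two distinct snarls in $S_{RL}$ share $y$.

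The main obstacle is the last step: proving that $(x_1,x_2')$ is separable. The difficulty is ruling out a path from $x_1$ that reaches the region between $x_2$ and $y$ without passing through $x_2$. I would handle it with the reachability fact behind Observation~\ref{o:depth}, that $(x_2,y)$ can be entered only through $x_2$. I would combine this with the bipartite edge directions to show that any such path would have to enter $(x_2,y)$ from outside, contradicting separability of $(x_2,y)$. If needed, Lemma~\ref{lem:overlap} would rule out the residual case in which the two snarls partially overlap without nesting.
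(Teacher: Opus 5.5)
Your proof is correct, but it takes a different route from the paper's.

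\textbf{The paper's route.} The paper never invokes minimality. It argues that after deleting the black edge $(sn_3',sn_3)$, the part containing $sn_3'$ can be joined neither to the $(sn_1,sn_2)$ subgraph nor to the root without breaking separability of $(sn_3,sn_4)$, which contradicts rootedness. This is the side-by-side picture of Figure~\ref{fig:comp}\textbf{C}. Its key step (the root reaches $sn_1$, which reaches $sn_4$) tacitly assumes that $sn_1$ reaches $sn_4$ without crossing $(sn_3',sn_3)$. That assumption fails when the two snarls are nested.

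\textbf{Your route.} You use Observation~\ref{o:depth} to make $x_1$ and $x_2$ both dominators of $y$, hence comparable on the $B_t$ path to $y$. This rules out the side-by-side case outright and forces nesting. The black edge $(x_2',x_2)$ then witnesses that the outer snarl is not minimal.

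\textbf{What each buys.} The paper's argument is shorter and uses only separability and rootedness. Yours covers the nested case, which is the one that can actually occur and where minimality is indispensable. For example, in the chain $A_L\to A_R\to B_L\to B_R\to C_L\to C_R$ rooted at $A_L$, both $(A_R,C_L)$ and $(B_R,C_L)$ are separable, and only minimality excludes $(A_R,C_L)$.

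\textbf{Tightening your last step.} Separability concerns undirected components, so the argument should be phrased undirectedly.
Let $C$ and $D$ be the components of $(x_1,y)$ and $(x_2,y)$.
The only edges joining $D$ to the rest of $B$ are $(x_2',x_2)$ and $(y,y')$.
Also $x_1\notin D$: every root path into $D$ enters through $x_2$, while $x_1$ is a proper $B_t$-ancestor of $x_2$.
Hence $D$ is connected without using $(x_1',x_1)$ and contains $y\in C$, so $D\subseteq C$.
Then $C\setminus D$ contains $x_1$ and $x_2'$, which are joined by the tree path.
It is attached to the rest of $B$ only through $(x_1',x_1)$ and $(x_2',x_2)$, which is exactly separability of $(x_1,x_2')$.

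Your fallback to Lemma~\ref{lem:overlap} is unnecessary, and it would not apply anyway, since crossing snarls by definition share no frontier node.
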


By Lemmas~\ref{lem:overlap}, \ref{lem:leftmost}, \ref{lem:rightmost} (see Appendix~\ref{sec:omitted_proofs} for proofs) we have Theorem~\ref{thm:compatible}.

\begin{theorem}
\label{thm:compatible}
 Given the set $S_{RL}$ of $R-L$ snarls $(x,y)$ of a given rooted biedged graph $B$, after removing the $R-L$ snarls that share a common frontier node, we get the set of compatible $R-L$ snarls of $B$, $K_c$. All ultrabubbles in $B$ will be members of $K_c$. 
\end{theorem}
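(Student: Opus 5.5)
The theorem makes two claims: the pruned set $K_c$ is compatible, and no ultrabubble is lost by the pruning. I will prove them separately, using Lemmas~\ref{lem:overlap}, \ref{lem:leftmost} and \ref{lem:rightmost} together with the observation from the preliminaries that only $R-L$ snarls can be ultrabubbles. Throughout, $K_c$ is $S_{RL}$ with every snarl removed that shares a frontier node with some other snarl of $S_{RL}$.

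\emph{Compatibility.} Take two distinct snarls $(x_1,y_1),(x_2,y_2)\in K_c$. I will show their subgraphs are either disjoint or nested, by splitting into cases on their frontier nodes.
\begin{enumerate}
\item[(a)] Suppose they share a frontier node. This cannot happen in $K_c$ by construction. It is still worth recording which sharings occur in $S_{RL}$ itself:
  \begin{enumerate}
  \item[] By Lemma~\ref{lem:rightmost}, $y_1=y_2$ is impossible.
  \item[] The case $x_1=x_2$ is exactly what Lemma~\ref{lem:leftmost} addresses.
  \item[] A mixed sharing such as $y_1=x_2$ is impossible for node types: $y_1$ is an L node and $x_2$ is an R node.
  \end{enumerate}
\item[(b)] Suppose they share no frontier node. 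If they share no nodes at all, they are disjoint and hence compatible. If they do share nodes, then by definition they are either nested or crossing, and Lemma~\ref{lem:overlap} rules out crossing. So they are nested, i.e.\ one contains the other.
\end{enumerate}
Hence every pair in $K_c$ is disjoint or nested, so $K_c$ is compatible.

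\emph{Retaining ultrabubbles.} Let $U$ be an ultrabubble of $B$. By the first observation in the preliminaries, $U$ is an $R-L$ snarl, so $U\in S_{RL}$. Suppose for contradiction that $U$ was removed, so it shares a frontier node with another $R-L$ snarl. By Lemma~\ref{lem:rightmost} this cannot be its rightmost node. Because the types differ, its L-type node $y$ cannot be another snarl's $x$, and its R-type node $x$ cannot be another snarl's $y$. So the shared node must be the leftmost node $x$. Lemma~\ref{lem:leftmost} then says neither snarl is an ultrabubble, which contradicts the choice of $U$. Therefore every ultrabubble belongs to $K_c$.

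\emph{Main obstacle.} The delicate step is the type-mismatch argument excluding mixed sharing. It relies on the convention that $x$ is always R and $y$ always L in an $R-L$ snarl. It also relies on reading "crossing" precisely as in the definition: sharing nodes, not nested, and with no common frontier nodes. This reading is what guarantees that the case split in (b) is exhaustive, so I will check carefully that overlapping snarls with distinct frontiers really fall under the crossing definition.
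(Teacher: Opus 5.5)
Your proposal is correct and follows the paper's route: the paper obtains the theorem directly by combining Lemmas~\ref{lem:overlap}, \ref{lem:leftmost} and \ref{lem:rightmost}, just as you do, using Lemma~\ref{lem:overlap} for compatibility and Lemmas~\ref{lem:leftmost} and \ref{lem:rightmost} for retaining ultrabubbles. Your type-mismatch argument (an L-type $y$ can never equal an R-type $x$) makes explicit a case the paper leaves implicit, so that the only possible shared frontier is the leftmost node.
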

  Clearly, $K_c$ is bounded by $n$, where $n$ is the number of nodes of $B$.

\begin{observation}
 The aforementioned Lemmas~\ref{lem:overlap}, \ref{lem:leftmost}, \ref{lem:rightmost} and Theorem~\ref{thm:compatible} are in alignment and correspondence with the established theory for compatible families of snarls \cite{paten2018superbubbles}, but here proofs are simpler because of the different topology of the rooted biedged bipartite graphs we work with. Specifically, there is no need of the bridge edges approach that is used in \cite{paten2018superbubbles}.    
\end{observation}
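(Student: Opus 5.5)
The statement is a meta-claim with two parts. First, Lemmas~\ref{lem:overlap}, \ref{lem:leftmost}, \ref{lem:rightmost} and Theorem~\ref{thm:compatible} agree with the compatible-family theory of snarls in \cite{paten2018superbubbles}. Second, in the rooted biedged bipartite setting their proofs avoid the bridge-edge machinery used there. I would justify it by a side-by-side comparison rather than a new proof.

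I would start by recalling which results of \cite{paten2018superbubbles} are being matched. These are that snarls in general are not a compatible set, but that a compatible family can be extracted, and that ultrabubbles are pairwise compatible, so all of them lie in such a family. I would then pair each of our statements with its counterpart.
\begin{itemize}
\item Lemma~\ref{lem:overlap} is the specialization of the statement that minimal snarls cannot partially overlap without sharing a boundary.
\item Lemmas~\ref{lem:leftmost} and \ref{lem:rightmost} play the role of the case in \cite{paten2018superbubbles} where overlapping snarls share a boundary edge, i.e.\ chains of snarls. In that case one of the sharing snarls is discarded or decomposed, and neither can be an ultrabubble.
\item Theorem~\ref{thm:compatible} is then the counterpart of the statement that the compatible family contains every ultrabubble.
\end{itemize}
For these pairings to compare like with like, I would invoke the preliminaries. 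The transformation to a rooted biedged bipartite graph preserves connectivity, paths and cycles. It therefore preserves the snarl set, and even with an artificial root it preserves the ultrabubble set. Hence the ultrabubbles retained in $K_c$ are exactly those of \cite{paten2018superbubbles}.

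Next I would explain why bridge edges are not needed. In \cite{paten2018superbubbles}, bridge edges are the source of ambiguity. Removing a bridging black edge separates the graph in several ways, so many pairs of boundaries satisfy separability and the resulting snarls overlap. A bridge-based decomposition is required to tame this. In our graphs every $R-L$ snarl $(x,y)$ is oriented from the root, with $d(x)<d(y)$ by Observation~\ref{o:depth}. Its contents are characterized by the DFS intervals of the BFS tree $B_t$, with $t_d(x)<t_d(y)\le t_f(x)$ (Observations~\ref{o:discovery}, \ref{o:finish}), and by the LCA criterion from the preliminaries.

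Since these intervals in a rooted tree are laminar (any two are nested or disjoint), there are only two ways two $R-L$ snarls can overlap. Either they are nested, or they share a frontier node. The situations that bridges handle in \cite{paten2018superbubbles} therefore show up here only as shared frontier nodes. Lemmas~\ref{lem:leftmost} and \ref{lem:rightmost} resolve these directly by discarding such snarls, so no separate bridge analysis is required.

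The step I expect to be hardest is making the ``alignment'' precise. $K_c$ is obtained by removing snarls, while the family in \cite{paten2018superbubbles} is obtained by decomposition. So I would claim correspondence only at the level of the ultrabubbles retained and the pairwise compatibility guaranteed, and not equality of the two families. I would state that restriction explicitly.
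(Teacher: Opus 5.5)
Your pairing of the lemmas with the compatible-family results of \cite{paten2018superbubbles} is acceptable for a meta-claim. The gap is in the paragraph meant to explain why no bridge-edge analysis is needed, which rests on a false inference. The intervals $[t_d(v),t_f(v)]$ of $B_t$ are laminar. But by the LCA criterion (cf.\ Observation~\ref{o:interval}), the node set of an $R-L$ snarl $(x,y)$ is not the interval of $x$. It is $T(x)\setminus T(y)$, where $T(v)$ is the subtree of $v$ in $B_t$, and differences of members of a laminar family need not be nested or disjoint.

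Take four nodes $x_1,x_2,y_1,y_2$ appearing in this order on one root-to-leaf path of $B_t$. Then $x_1$ lies only in $T(x_1)\setminus T(y_1)$, $y_1$ lies only in $T(x_2)\setminus T(y_2)$, and $x_2$ lies in both. So the two sets cross without sharing a frontier node, even though the four subtrees form a chain. This is exactly the configuration of Figure~\ref{fig:comp}\textbf{A}: $sn_3$ lies between $sn_1$ and $sn_2$, and $sn_4$ lies beyond $sn_2$. It is precisely what Lemma~\ref{lem:overlap} has to exclude. Hence your dichotomy ``nested or sharing a frontier node'' does not follow from the tree structure; if it did, Lemma~\ref{lem:overlap} would be trivial.

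The paper excludes crossing with a graph argument based on separability, minimality and the root. Either every route from $sn_1$ to $sn_4$ uses the black edge $sn_3'$--$sn_3$, and then that edge splits $(sn_1,sn_2)$, contradicting minimality. Or some branch $Br$ avoids it, and then $sn_3'$ and $sn_4$ stay in one component after deleting the two black edges, contradicting separability of $(sn_3,sn_4)$. Lemma~\ref{lem:rightmost} similarly uses that $sn_3'$ must be reachable from the root.

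In the first case a bridge edge does appear, and it is disposed of by minimality. So your claim that bridge situations surface ``only as shared frontier nodes'' is also inaccurate. The simplification the observation asserts comes from the root and the fixed left-right orientation, which keep these separability and minimality arguments short, not from laminarity of DFS intervals. You should replace the laminarity paragraph with a pointer to, or a sketch of, these arguments.
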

  
Given the $R-L$ snarls $(x,y)$ of a rooted biedged graph $B$, by Theorem~\ref{thm:compatible}, we can obtain $K_c$ by removing the snarls that share a frontier node. As $K_c$ is compatible, for each pair of snarls of $K_c$, the pair is either nested to each other or they have separate subgraphs. We can now use the following Lemma to identify the nested structure of $K_c$.

\begin{lemma}
\label{lem:nested_finish}
Given a DFS traversal of the BFS tree of a rooted biedged graph $B$, the corresponding finishing times $t_f(v)$ for each node $v \in B$, the set $K_{c}$ of compatible $R-L$ snarls of $B$, and two snarls $\{(r_i, l_i), (r_j, l_j)\} \in K_c, i\ne j$, $(r_j, l_j)$ is nested within $(r_i, l_i)$ iff $t_f(r_i) > t_f(r_j) > t_f(l_j) > t_f(l_i)$. 
\end{lemma}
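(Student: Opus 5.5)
We prove Lemma~\ref{lem:nested_finish} by translating snarl membership into ancestry in the BFS tree $B_t$, and ancestry into nesting of DFS intervals $[t_d(v),t_f(v)]$. Three facts from the preliminaries drive the argument. First, for an $R-L$ snarl $(x,y)$, $y$ is a descendant of $x$ in $B_t$, so $t_d(x)<t_d(y)$ and $t_f(y)<t_f(x)$ (Figure~\ref{fig:sweep} and Observation~\ref{o:finish}). Second, a node $t$ lies in $(x,y)$ iff $t$ is in the subtree of $x$ but not in the strict subtree of $y$. Third, DFS intervals in $B_t$ are either nested or disjoint. The inner inequality $t_f(r_j)>t_f(l_j)$ therefore always holds, and the content of the lemma lies in the two outer inequalities.

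For the forward direction, assume $(r_j,l_j)$ is nested in $(r_i,l_i)$, with $i\ne j$. Since $K_c$ has no shared frontier nodes, $r_j\neq r_i$ and $l_j\ne l_i$.
\begin{itemize}
\item $r_j$ is a node of $(r_i,l_i)$, so by the membership criterion it is a descendant of $r_i$. As $r_j\ne r_i$, we get $t_f(r_j)<t_f(r_i)$.
\item For the inequality $t_f(l_j)>t_f(l_i)$, I would argue that $l_i$ is not in $(r_j,l_j)$. If it were, the frontier of the outer snarl would lie inside the inner one, contradicting nesting with distinct frontiers. Since $l_i$ descends from $r_j$ (argued next), the membership criterion then forces $l_i$ into the subtree of $l_j$, and distinctness gives $t_f(l_i)<t_f(l_j)$.
\item To show $l_i$ descends from $r_j$, I would use separability. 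The part of $B$ reachable from $r_i$ beyond $l_j$ must exit the inner snarl through the black edge at $l_j$. Every root-to-$l_i$ path passes through $r_i$ and then through the inner snarl, so in the BFS tree $l_i$ hangs below $l_j$.
\end{itemize}
This last step is the main obstacle: showing that BFS shortest paths from the root to $l_i$ must route through $l_j$. I would try first to derive it from separability: removing the black edge at $l_j$ disconnects $l_j'$ from the inner component. I would also use Observation~\ref{o:depth} to handle the case where $l_i$ is reached only through $r_i$.

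For the converse, assume $t_f(r_i)>t_f(r_j)>t_f(l_j)>t_f(l_i)$. By compatibility of $K_c$, the two snarls are either nested or disjoint, so it suffices to exclude disjointness and the reverse nesting.
\begin{itemize}
\item The reverse nesting would, by the forward direction with the roles of $i$ and $j$ swapped, give $t_f(r_j)>t_f(r_i)$, a contradiction.
\item Suppose the snarls are disjoint. Since $t_f(l_i)<t_f(l_j)<t_f(r_j)<t_f(r_i)$, and $l_i$ is a descendant of $r_i$, I would locate $r_j$ within the DFS ordering. If $r_j$ is a descendant of $r_i$ but not of $l_i$, then $r_j\in(r_i,l_i)$, so the two subgraphs share a node and cannot be disjoint.
\item The remaining case is $r_j$ being a descendant of $l_i$ or unrelated to $r_i$. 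Here I would use the inequality $t_f(l_j)>t_f(l_i)$, together with the interval nesting-or-disjointness property, to show that $r_j$ cannot sit under $l_i$: $l_j$ would then also sit under $l_i$, giving $t_f(l_j)<t_f(l_i)$.
\end{itemize}
Ruling out the unrelated case with finishing times alone is delicate. If it fails, I would strengthen the argument by also invoking discovery times, namely $t_d(r_i)<t_d(r_j)$, from the ordering of the intervals.
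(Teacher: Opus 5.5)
Your forward direction fails at the step you flagged, and no argument can repair it. The claim that $l_i$ descends from $r_j$ in $B_t$ is false in general. Separability of $(r_j,l_j)$ only says that paths \emph{leaving} the inner snarl exit through the black edge at $l_j$. It does not force root-to-$l_i$ paths to \emph{enter} the inner snarl: the outer snarl may contain a branch from $r_i$ to $l_i$ that bypasses it, and $B_t$ follows shortest paths.

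Concretely, take the variation graph with edges $s\to a$, $a\to x$, $a\to y$, $y\to z$, $x\to b$, $z\to b$, $b\to t$, and root $s_L$. Then $(a_R,b_L)$ and $(y_R,z_L)$ are $R-L$ snarls. No other $R-L$ snarl uses their frontier nodes, so both lie in $K_c$, and $(y_R,z_L)$ is nested in $(a_R,b_L)$. However, $b_L$ is reached at BFS depth $6$ via $x_R$ and only at depth $8$ via $z_R$, so its parent in $B_t$ is $x_R$. If the DFS of $B_t$ visits $y_L$ before $x_L$, the whole subtree of $y_R$ finishes before $b_L$ is discovered. This gives $t_f(z_L)<t_f(y_R)<t_f(b_L)$, so the chain $t_f(r_j)>t_f(l_j)>t_f(l_i)$ fails; it holds only for the other child order. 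Hence the ``only if'' direction is false for an arbitrary DFS of $B_t$.

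The paper's proof breaks at the same point. It excludes $t_f(l_i)>t_f(r_j)$ by asserting that this forces ``no path from $r_j$ to $l_i$'', yet here $y_R\to z_L\to z_R\to b_L$ is such a path. The appendix's justification for a fallback pass in the nested-family build describes essentially this situation.

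Your converse, by contrast, is sound, and the case you call delicate is immediate. Suppose $r_j$ is not a descendant of $r_i$ while $t_f(r_j)<t_f(r_i)$. Then the DFS interval of $r_j$ precedes that of $r_i$, so $t_f(r_j)<t_d(r_i)<t_f(l_i)$, contradicting $t_f(r_j)>t_f(l_i)$. Excluding the reverse nesting needs only your first forward bullet, which is valid.

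In fact, your first forward bullet together with your converse case analysis proves a characterization that holds for every DFS order: $(r_j,l_j)$ is nested in $(r_i,l_i)$ iff $r_j$ is a proper descendant of $r_i$ in $B_t$ but not a descendant of $l_i$. Equivalently, $r_j\in(r_i,l_i)$ with $r_j\neq r_i$. Expressed with DFS times, this test needs discovery times as well as finishing times. That is the statement you should prove in place of the four-term chain.
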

Given $K_c$ and Lemma~\ref{lem:nested_finish} (see Appendix~\ref{sec:omitted_proofs} for proof), we can now find the nested hierarchy of parent-child snarls by adding snarl $i$ as a new node and moving below that node in the hierarchy when finishing node $l_i$ in the DFS, and moving up one level in the hierarchy when finishing an $r$ node (for any $r$ node in $K_c$). This procedure creates a set of trees, where each tree represents a nested hierarchy of parent-child snarls.  

Given this nested hierarchy, one can traverse the snarls either bottom-up or top-down to identify the snarls that are ultrabubbles. Traversing the hierarchy bottom-up corresponds to processing the inner snarls before the outer ones. In this case, if snarl $i$ contains a node from the $ftip$ set, we can immediately reject both snarl $i$ and its parents as ultrabubbles. However, all of snarl $i$'s siblings must still be evaluated. Traversing the hierarchy top-down corresponds to processing the outer snarls before the inner ones. In this case, if snarl $i$ is identified as an ultrabubble, then all of $i$'s children are also ultrabubbles. Of the two, the top-down approach is the more appropriate if most candidate snarls are ultrabubbles; vice versa if most snarls are rejected as ultrabubbles. Here, we choose to use the top-down method, as implemented in the following two algorithms, NESTED\_UltraLCA (N-ULCA) and NESTED\_HYBRID\_UltraLCA (NH-ULCA).
NESTED\_UltraLCA  uses UltraLCA to check the nested snarls top-down. If a parent is accepted as an ultrabubble, then also all of its child snarls are accepted; if not, the algorithm recursively checks all its children according to the hierarchical structure.
NESTED\_HYBRID\_UltraLCA follows the same logic as NESTED\_UltraLCA, but chooses between UltraLCA and the Naive algorithm as described in Section~\ref{subsec:hybridalgo1}. 

\section{Materials and methods}
\label{sec:methods}

\subsection{Graphs}
\label{subsec:graphs}
We used bidirected variation graphs in Graphical Fragment Assembly (GFA) format, both real and synthetic ones, that were processed as described in this paper and in \cite{zisis2026ultrabubble}, to become biedged bipartite graphs isomorphic with directed ones. The graphs used here are the same as in \cite{zisis2026ultrabubble} and some basic information about them is provided in Table~\ref{tab:graphinfo}.

\begin{table}[ht!]
\centering
\caption{Graphs used for evaluation purposes in our experiments. The graphs were preprocessed as described in \cite{zisis2026ultrabubble}.}
\label{tab:graphinfo}

\resizebox{\textwidth}{!}{%
\begin{tabular}{lccc}
\toprule
Dataset & NODES & EDGES & TIPS \\
\midrule

\path{C4-90_VG_rem_nroot} & 19 & 20 & 7 \\

\path{cerevisiae.fa.gz.d1a145e.417fcdf.7493449.smooth.final_comp11_rem_nroot}
& 54721 & 73869 & 9 \\

\path{cerevisiae.fa.gz.d1a145e.417fcdf.7493449.smooth.final_comp4_nroot}
& 72349 & 97855 & 7 \\

\path{cerevisiae.fa.gz.d1a145e.417fcdf.7493449.smooth.final_comp7_rem_nroot}
& 216342 & 292679 & 55 \\

\path{chr19.pan.fa.a2fb268.e820cd3.9ea71d8.smooth_comp14_rem_nroot}
& 2858719 & 3985788 & 458 \\

\path{chr6.C4} & 1748 & 2366 & 2 \\

\path{chr6.pan.fa.a2fb268.2ff309f.1300c8a.smooth_comp1_rem_nroot}
& 4906290 & 6856764 & 606 \\

\path{chr6.pan.fa.a2fb268.2ff309f.1300c8a.smooth_comp5_nroot}
& 5475 & 7374 & 2 \\

\path{chr6.pan.fa.a2fb268.2ff309f.1300c8a.smooth_comp6_rem_nroot}
& 1700 & 2283 & 4 \\

\path{chrM.pan.fa.6626ff2.7748b33.72587dd.smooth}
& 1395 & 1887 & 2 \\

\path{LPA} & 3751 & 5195 & 2 \\
\path{lpa120} & 120 & 169 & 2 \\
\path{MHC-57b_VG} & 1068 & 1496 & 2 \\

\midrule

\path{synth1} & 8091 & 10133 & 200 \\
\path{synth2} & 173879 & 219006 & 152 \\
\path{synth3} & 27886 & 35086 & 7 \\
\path{synth4} & 37105 & 47488 & 7 \\
\path{synth5} & 834549 & 1071462 & 157 \\
\path{synth6} & 211938 & 273297 & 3 \\

\bottomrule
\end{tabular}%
}
\end{table}

\subsection{Snarls set}
\label{subsec:snarls}
 We used the vg toolkit \cite{garrison2018variation} with the flag -T to identify snarls in the graphs. This tool outputs a hierarchical structure of compatible snarls, including the trivial ones, as explained in \cite{paten2018superbubbles}. 
 Note that the vg output is a subset of all potential snarls \cite{paten2018superbubbles}, 
 but the snarls missed by vg cannot form ultrabubbles \cite{paten2018superbubbles}. %, as shown in \cite{paten2018superbubbles} and \cite{zisis2026ultrabubble}. 
 Moreover, the JSON output files of vg identify which snarls are acyclic; we kept only the acyclic ones.
 
\subsection{Identifying ultrabubbles}
\label{subsec:idultra}
We used each of the eight algorithms described in Section~\ref{sec:algorithms} to identify the ultrabubbles in a given set of snarls. Several processing steps were shared between the algorithms and were implemented as separate functions; see Table~\ref{tab:algos} for an overview and Section~\ref{subsec:preprocessing} for details.

\subsection{Implementation setup}
\label{subsec:implementation}
 All algorithms were implemented in Python and the Python scripts were run on a laptop with Intel\textsuperscript{\textregistered} Core\textsuperscript{\texttrademark} i7-1280p that contained 14 Cores and 20 Threads and had 16 GB RAM. 
Codex (ChatGPT, OpenAI) was used to support code development.
The pipeline code is available at \url{https://github.com/athanasios-zisis/UltraLCA-2}.

\section{Results}
\label{sec:results}
We used a set of real variation graphs and synthetic biedged graphs \cite{zisis2026ultrabubble} to benchmark the six new algorithms 
SWEEP, SWEEP\_INTERVAL, HYBRID\_UltraLCA\_ONLINE, HYBRID\_SWEEP\_ONLINE, NESTED\_UltraLCA, and NESTED\_HYBRID\_UltraLCA against the UltraLCA \cite{zisis2026ultrabubble} and Naive \cite{paten2018superbubbles} algorithms; see Methods for details.

When evaluating the clean execution times without preprocessing, the sweep algorithms, SWEEP\_INTERVAL, SWEEP, and HYBRID\_SWEEP\_ONLINE, generally outperformed the other algorithms, with SWEEP\_INTERVAL being 10 to 100 times faster than the Naive algorithm (Table~\ref{tab:algoclean}, Figure~\ref{fig:cleanratio}). On the largest graphs that contained the most tips (CHR19-14 and CHR6-1), SWEEP\_INTERVAL was almost 1000 times faster than UltraLCA. The NESTED\_UltraLCA and UltraLCA algorithms had very similar execution times, with NESTED\_UltraLCA generally being slightly faster. The hybrid UltraLCA algorithms, HYBRID\_UltraLCA\_ONLINE and NESTED\_HYBRID\_UltraLCA, had execution times comparable to the Naive algorithm.  

When considering the execution times needed for preprocessing the graphs, the differences between the algorithms were smaller (Table~\ref{tab:pretotal}), as the algorithms relied on many of the same preprocessing methods. The main exceptions were the two nested snarl algorithms, which used about 50\% longer time for preprocessing than the UltraLCA algorithm. We note that apart from reading the graph and snarls, the Naive algorithm does no preprocessing, making it the fastest algorithm overall in this benchmark. However, as noted in \cite{zisis2026ultrabubble}, the Naive algorithm is much slower if it has to consider all potential snarls instead of only the $R-L$ snarls used as input here, as $L-L$ and $R-R$ snarls tend to have much larger subgraphs.

When considering the asymptotic complexities for the algorithms, the Naive algorithm is $O(K(n+m))$ but in sparse graphs where \(n \approx m\), such as those used in our benchmark, its complexity is $O(Kn)$. Consequently, of all the algorithms considered here, only the SWEEP\_INTERVAL algorithm's $O(n+m+K)$ complexity is better on sparse graphs. The SWEEP algorithm also has $O(Kn)$ complexity, but achieved practical execution times close to SWEEP\_INTERVAL on our benchmark, as the graphs typically were "long" instead of "broad" with many parallel paths, allowing nodes from the $ftip$ set to be eliminated from being evaluated multiple times.   

\begin{figure}[ht!]
\centering
\includegraphics[width=0.95\textwidth]{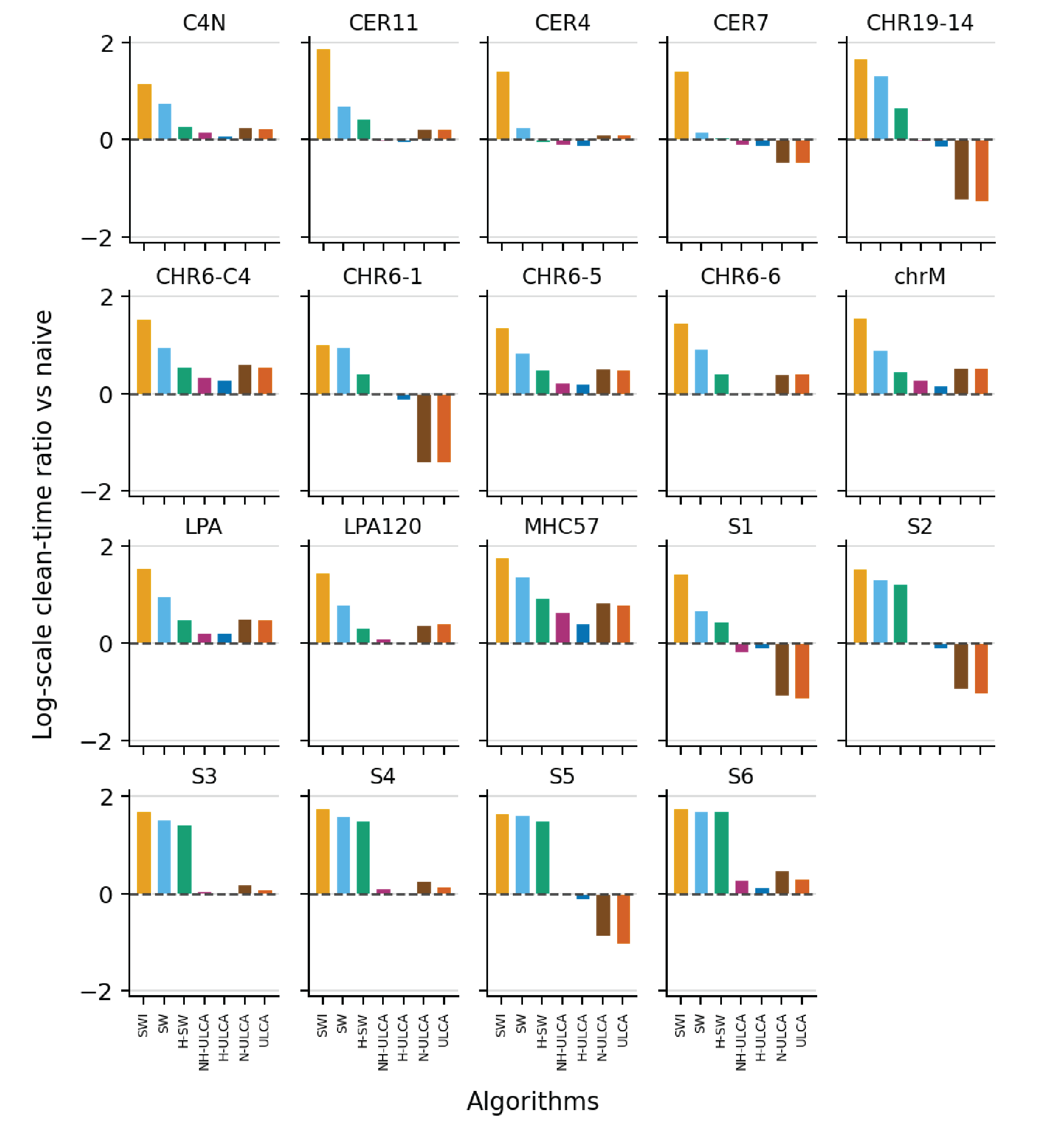}
\caption{Clean execution times per algorithm (x-axis) relative to those of the Naive algorithm on each dataset. The ratios (algorithm / Naive) are shown in $-\log_{10}$ scale (y-axis); consequently, a value above 0 means that the corresponding algorithm is faster than the Naive algorithm.
}
\label{fig:cleanratio}
\end{figure}

\begin{table}[ht!]
\centering
\caption{Clean execution times for running the algorithms post preprocessing on each dataset; see Table~\ref{tab:pretotal} for the preprocessing execution times. The columns show the dataset, the execution times for each algorithm (in seconds), and the number of input snarls, $R-L$ snarls, and ultrabubbles. The fastest execution time for each dataset is highlighted in bold.}
\label{tab:algoclean}
\resizebox{\textwidth}{!}{%
\begin{tabular}{lccccccccccc}
\toprule
Dataset & NAIVE & UltraLCA & SWEEP & \shortstack{SWEEP\\INTERVAL} & \shortstack{HYBRID\\UltraLCA\\ONLINE} & \shortstack{HYBRID\\SWEEP\\ONLINE} & \shortstack{NESTED\\UltraLCA} & \shortstack{NESTED\\HYBRID\\UltraLCA} & \shortstack{Total\\snarls} & \shortstack{R--L\\snarls} & Ultrabubbles \\
\midrule
C4-90\_VG\_rem\_nroot & 0.000122 & 0.000069 & 0.000021 & \textbf{0.000008} & 0.000102 & 0.000065 & 0.000066 & 0.000086 & 9 & 8 & 7 \\
cerevisiae.fa.gz.d1a145e.417fcdf.7493449.smooth.final\_comp11\_rem\_nroot & 0.322845 & 0.19535 & 0.064718 & \textbf{0.00417} & 0.364387 & 0.119781 & 0.193096 & 0.351836 & 18172 & 18169 & 18162 \\
cerevisiae.fa.gz.d1a145e.417fcdf.7493449.smooth.final\_comp4\_nroot & 0.358874 & 0.280119 & 0.19868 & \textbf{0.013702} & 0.492796 & 0.408072 & 0.280622 & 0.471028 & 23907 & 23905 & 23898 \\
cerevisiae.fa.gz.d1a145e.417fcdf.7493449.smooth.final\_comp7\_rem\_nroot & 1.154717 & 3.456369 & 0.780822 & \textbf{0.043667} & 1.555216 & 1.055648 & 3.529705 & 1.46166 & 71457 & 71443 & 71395 \\
chr19.pan.fa.a2fb268.e820cd3.9ea71d8.smooth\_comp14\_rem\_nroot & 17.936609 & 336.716343 & 0.87527 & \textbf{0.392153} & 24.81967 & 3.971431 & 299.633108 & 19.084889 & 904944 & 904930 & 904804 \\
chr6.C4 & 0.009265 & 0.002519 & 0.000993 & \textbf{0.000263} & 0.004796 & 0.002588 & 0.00222 & 0.004256 & 584 & 583 & 582 \\
chr6.pan.fa.a2fb268.2ff309f.1300c8a.smooth\_comp1\_rem\_nroot & 31.784529 & 842.285944 & 3.478898 & \textbf{2.974145} & 41.105871 & 11.802887 & 821.613878 & 34.297282 & 1555547 & 1555534 & 1555411 \\
chr6.pan.fa.a2fb268.2ff309f.1300c8a.smooth\_comp5\_nroot & 0.032999 & 0.010347 & 0.004685 & \textbf{0.001382} & 0.020802 & 0.01053 & 0.009827 & 0.019519 & 1896 & 1894 & 1894 \\
chr6.pan.fa.a2fb268.2ff309f.1300c8a.smooth\_comp6\_rem\_nroot & 0.009487 & 0.003526 & 0.001126 & \textbf{0.000329} & 0.009572 & 0.003558 & 0.003706 & 0.009728 & 586 & 585 & 584 \\
chrM.pan.fa.6626ff2.7748b33.72587dd.smooth & 0.00737 & 0.002175 & 0.000926 & \textbf{0.000205} & 0.004955 & 0.002541 & 0.002136 & 0.003888 & 456 & 456 & 456 \\
LPA & 0.022737 & 0.007408 & 0.002447 & \textbf{0.000644} & 0.01383 & 0.007358 & 0.006941 & 0.013945 & 1305 & 1305 & 1305 \\
lpa120 & 0.000644 & 0.000248 & 0.000104 & \textbf{0.000023} & 0.000609 & 0.000312 & 0.000276 & 0.000504 & 56 & 56 & 56 \\
MHC-57b\_VG & 0.009115 & 0.001483 & 0.000377 & \textbf{0.000158} & 0.003492 & 0.00108 & 0.001324 & 0.002066 & 376 & 376 & 376 \\
synth1 & 0.053129 & 0.725675 & 0.011024 & \textbf{0.001949} & 0.068957 & 0.019087 & 0.622382 & 0.080897 & 4662 & 4662 & 4453 \\
synth2 & 1.124042 & 12.468913 & 0.052957 & \textbf{0.032267} & 1.432235 & 0.06599 & 9.993535 & 1.186372 & 102585 & 102585 & 102421 \\
synth3 & 0.16155 & 0.132107 & 0.004828 & \textbf{0.003247} & 0.171983 & 0.006113 & 0.106613 & 0.140186 & 16542 & 16542 & 16536 \\
synth4 & 0.228025 & 0.16336 & 0.005784 & \textbf{0.00399} & 0.234178 & 0.007296 & 0.123888 & 0.174403 & 20810 & 20810 & 20804 \\
synth5 & 5.590897 & 60.755741 & 0.133024 & \textbf{0.125709} & 7.536372 & 0.178777 & 41.814505 & 5.537226 & 467294 & 467294 & 467110 \\
synth6 & 1.44153 & 0.721291 & 0.029062 & \textbf{0.025942} & 1.083951 & 0.028797 & 0.483111 & 0.760768 & 119602 & 119602 & 119601 \\
\bottomrule
\end{tabular}%
}
\end{table}

\begin{table}[ht!]
\centering
\caption{Total preprocessing execution times (in seconds) for each algorithm and dataset. See Tables~\ref{tab:algos} and \ref{tab:pre} for a breakdown of each algorithm's specific preprocessing steps and their individual execution times.}
\label{tab:pretotal}
\resizebox{\textwidth}{!}{%
\begin{tabular}{lccccccc}
\toprule
Dataset &
UltraLCA &
SWEEP &
\shortstack{SWEEP\\INTERVAL} &
\shortstack{HYBRID\\UltraLCA\\ONLINE} &
\shortstack{HYBRID\\SWEEP\\ONLINE} &
\shortstack{NESTED\\UltraLCA} &
\shortstack{NESTED\\HYBRID\\UltraLCA} \\
\midrule
C4-90\_VG\_rem\_nroot & 0.003 & 0.003 & 0.003 & 0.003 & 0.003 & 0.003 & 0.003 \\
cerevisiae.fa.gz.d1a145e.417fcdf.7493449.smooth.final\_comp11\_rem\_nroot & 1.308 & 1.433 & 1.375 & 1.308 & 1.433 & 2.060 & 2.060 \\
cerevisiae.fa.gz.d1a145e.417fcdf.7493449.smooth.final\_comp4\_nroot & 1.970 & 2.127 & 1.987 & 1.970 & 2.127 & 3.193 & 3.193 \\
cerevisiae.fa.gz.d1a145e.417fcdf.7493449.smooth.final\_comp7\_rem\_nroot & 6.746 & 7.572 & 7.321 & 6.746 & 7.572 & 10.890 & 10.890 \\
chr19.pan.fa.a2fb268.e820cd3.9ea71d8.smooth\_comp14\_rem\_nroot & 110.914 & 121.456 & 117.139 & 110.914 & 121.456 & 168.979 & 168.979 \\
chr6.C4 & 0.169 & 0.171 & 0.042 & 0.169 & 0.171 & 0.185 & 0.185 \\
chr6.pan.fa.a2fb268.2ff309f.1300c8a.smooth\_comp1\_rem\_nroot & 200.634 & 223.747 & 215.673 & 200.634 & 223.747 & 343.852 & 343.852 \\
chr6.pan.fa.a2fb268.2ff309f.1300c8a.smooth\_comp5\_nroot & 0.554 & 0.558 & 0.139 & 0.554 & 0.558 & 0.609 & 0.609 \\
chr6.pan.fa.a2fb268.2ff309f.1300c8a.smooth\_comp6\_rem\_nroot & 0.043 & 0.044 & 0.042 & 0.043 & 0.044 & 0.058 & 0.058 \\
chrM.pan.fa.6626ff2.7748b33.72587dd.smooth & 0.029 & 0.030 & 0.029 & 0.029 & 0.030 & 0.041 & 0.041 \\
LPA & 0.083 & 0.085 & 0.081 & 0.083 & 0.085 & 0.117 & 0.117 \\
lpa120 & 0.006 & 0.007 & 0.006 & 0.006 & 0.007 & 0.008 & 0.008 \\
MHC-57b\_VG & 0.051 & 0.052 & 0.051 & 0.051 & 0.052 & 0.060 & 0.060 \\
synth1 & 0.175 & 0.182 & 0.174 & 0.175 & 0.182 & 0.262 & 0.262 \\
synth2 & 5.258 & 5.511 & 5.323 & 5.258 & 5.511 & 18.879 & 18.879 \\
synth3 & 1.316 & 1.402 & 1.366 & 1.316 & 1.402 & 1.829 & 1.829 \\
synth4 & 1.130 & 1.241 & 1.200 & 1.130 & 1.241 & 1.786 & 1.786 \\
synth5 & 30.259 & 33.736 & 32.820 & 30.259 & 33.736 & 49.172 & 49.172 \\
synth6 & 9.100 & 10.259 & 10.007 & 9.100 & 10.259 & 14.244 & 14.244 \\
\bottomrule
\end{tabular}%
}
\end{table}

\section{Discussion and conclusions}
\label{sec:discussion}
In this work, we expanded our previous LCA-based method for identifying ultrabubbles in a given set of snarls within a rooted bipartite biedged graph. Specifically, we presented three distinct ideas for improving the LCA-based method: (i, sweep) use the discovery and finishing times from a DFS traversal of the graph's BFS tree to efficiently process the nodes from the $ftip$ set, (ii, hybrid) select between the UltraLCA and Naive algorithm depending on the size of the snarl in relation to the $ftip$ set, and (iii, nested) use the parent-child structure of nested snarls to skip processing of nested snarls. We develop these ideas into six different algorithms and benchmark these on a set of real and synthetic graphs. 

Whereas the hybrid and nested algorithms had execution times similar to or slightly faster than UltraLCA, the fastest of the two sweep algorithms, SWEEP\_INTERVAL, was up to 1000 times faster than UltraLCA. The reason is that SWEEP\_INTERVAL uses constant time per snarl, giving a linear asymptotic complexity of $O(n+m+K)$ compared with $O(Kn)$ for UltraLCA  \cite{zisis2026ultrabubble}, where $K$, $n$, and $m$ are the number of snarls, nodes, and edges, respectively.

Although the asymptotic complexities and total running times of our algorithms, especially the sweep-based related ones, are promising for more dense graphs and complex snarl structures, the practical gain will depend on the feasibility of integrating the preprocessing steps into approaches for identifying snarls.
Of the methods, the nested family algorithms would likely be the most difficult to integrate, since it requires knowing the full set of snarls.

\section*{Author contributions}

A.E.Z. conceived the methods, conducted the experiments, analyzed the results, and wrote the manuscript. P.S. reviewed the manuscript, contributed to methodological discussions, and supervised the work.

\bibliographystyle{splncs04}
\bibliography{sn-bibliography}

%\section{Appendix}
\appendix
\label{sec:appendix}

\section{Preprocessing}
\label{subsec:preprocessing}

In the following, we present the different preprocessing steps required for UltraLCA \cite{zisis2026ultrabubble} and the algorithms proposed here; Table~\ref{tab:algos} shows the steps used by each algorithm and their resulting asymptotic complexities. Table~\ref{tab:pre} shows the execution times for each preprocessing step on each graph in our benchmark.

\begin{table}[ht!]

\centering
\caption{Algorithms preprocessing dependencies and asymptotic complexities. The columns show the algorithm, the algorithm's different preprocessing steps, and the algorithm's asymptotic complexity. A check mark shows that the algorithm requires the corresponding preprocessing step; $^{*}$ notes that for SWEEP\_INTERVAL, only the \texttt{Rooted preorder tables} part of this stage is used as the \texttt{RMQ build} (RMQ structure) is not required by SWEEP\_INTERVAL. In the asymptotic complexities, $K$, $n$, and $m$ denote the number of candidate snarls, nodes, and edges in the graph. 
}
\label{tab:algos}
\resizebox{\textwidth}{!}{%
\begin{tabular}{lcccccccccccc}
\toprule
Method &
\shortstack{Graph\\build} &
\shortstack{Snarls\\parse} &
\shortstack{Snarl R--L\\normalization} &
\shortstack{Root\\check} &
\shortstack{LCA / preorder\\tables} &
\shortstack{FTIP\\set} &
\shortstack{FTIP preorder\\points} &
\shortstack{R--L preorder\\intervals} &
\shortstack{Prefix count\\table} &
\shortstack{DFS\\order} &
\shortstack{Nested family\\build} &
\shortstack{Asymptotic\\complexity} \\
\midrule
% NAIVE                 & \checkmark & \checkmark &            &            &            &            &            &            &            &            &            & $O(K(n+m))$ \\
UltraLCA                 & \checkmark & \checkmark & \checkmark & \checkmark & \checkmark & \checkmark &            &            &            &            &            & $O(Kn)$ \\
SWEEP                 & \checkmark & \checkmark & \checkmark & \checkmark & \checkmark & \checkmark & \checkmark & \checkmark &            &            &            & $O(Kn)$ \\
SWEEP\_INTERVAL       & \checkmark & \checkmark & \checkmark & \checkmark & \checkmark$^{*}$ & \checkmark & \checkmark & \checkmark & \checkmark &            &            & $O(n+m+K)$ \\
HYBRID\_UltraLCA\_ONLINE & \checkmark & \checkmark & \checkmark & \checkmark & \checkmark & \checkmark &            &            &            &            &            & $O(K(n+m)+Kn)$\;(\emph{Naive + UltraLCA}) \\
HYBRID\_SWEEP\_ONLINE & \checkmark & \checkmark & \checkmark & \checkmark & \checkmark & \checkmark & \checkmark & \checkmark &            &            &            & $O(K(n+m)+Kn)$\;(\emph{Naive + Sweep }) \\
NESTED\_UltraLCA         & \checkmark & \checkmark & \checkmark & \checkmark & \checkmark & \checkmark &            &            &            & \checkmark & \checkmark & $O(Kn)$ \\
NESTED\_HYBRID\_UltraLCA & \checkmark & \checkmark & \checkmark & \checkmark & \checkmark & \checkmark &            &            &            & \checkmark & \checkmark & $O(K(n+m)+Kn)$\;(\emph{Naive + UltraLCA }) \\
\bottomrule
\end{tabular}%
}
\end{table}

Note that the algorithms share six preprocessing steps, which in total have linear $O(n+m)$ complexity. Henceforth, We refer to these as the common subprocesses. When determining each algorithm's asymptotic complexity, we will consider the additional preprocessing steps needed by each algorithm.

\subsection{Graph build}
The "Graph build" process parses the given GFA file and (i) for each segment (node), $a$, creates a left node $a_L$ and a right node $a_R$ and connects them by a black edge, and (ii) for each link (edge), creates a corresponding gray edge. Note that the GFA graphs used here have intentionally no L-L or R-R edges as the graphs were only used for benchmarking the algorithms.
We therefore did not implement the linear time method of transforming a biedged graph into a biedged bipartite graph as described in \cite{zisis2026ultrabubble} and detailed in \cite{rautiainen2020graphaligner}. Nevertheless, the complexity of building the graph is $O(n+m)$, where $n$ and $m$ are the number of nodes and edges in the given variation graph.

\subsection{Snarls parse}\label{sec:snarlsparse}

The "Snarls parse" process parses the JSON output from the vg tool to extract all acyclic snarls of the given biedged graph $B$. Each snarl is represented as a tuple corresponding to the snarl's frontier nodes. The complexity is $O(K)$, where $K$ is the number of snarls, as each snarl is read and converted once. Note that at this stage, the snarls have not yet been characterized as $R-L$ or $L-R$ snarls; see Section~\ref{sec:rlnorm}. 

\subsection{Root check}\label{sec:rootcheck}
The "Root check" process checks that the graph has a working root; that is, a node such that when starting in that node one can traverse the whole graph by following the traversal rules for biedged graphs \cite{zisis2026ultrabubble}.
Specifically, in one modified BFS passage, the process checks that the provided root is an actual working root. The process therefore has linear $O(n+m)$ complexity. More information regarding how to build a working root if none exists, without altering the set of ultrabubbles of the graph, can be found in \cite{zisis2026ultrabubble}.

\subsection{LCA preorder tables}\label{sec:LCApreordertabs}
The "LCA preorder tables" process first constructs
a rooted tree $B_t$ by running a BFS from the root of the graph.  
Each node is assigned a parent and a depth, which shows the node's distance from the root. Second, the process runs a DFS in $B_t$ and assigns each node a preorder index and subtree end index, corresponding to the node's discovery time and fininshing time in the DFS. The DFS also records a Euler tour of $B_t$; that is, the sequence of nodes visited while moving from parents to children and back during the DFS. Third, a range minimum query (RMQ) is constructed based on the Euler tour and the depth values so that LCA queries in $B_t$ can be answered in $O(1)$ time. 
This process has complexity $(O(n+m)$, where $n$ and $m$ are the number of nodes and edges in the biedged graph.
The first two steps were implemented in \texttt{Rooted preorder tables} while the third step in \texttt{RMQ build} (see Table~\ref{tab:pre} for their specific execution times).

\subsection{Snarl R-L normalization}
\label{sec:rlnorm}
The "Snarl R-L normalization" process takes as input the set of snarls (Section~\ref{sec:snarlsparse}) and the BFS depth of each node (Section~\ref{sec:LCApreordertabs}) and for each snarl, identifies the frontier node that is closest to the root. 
In this way, all snarls are represented with respect to the rooted traversal of the graph. Moreover, all $R-L$ and $L-R$ snarls can be identified, as their frontier nodes will have unequal distances from the root. 
Since each candidate snarl is processed once and no new graph traversal is needed, this step has complexity $O(K)$, where $K$ is the number of candidate snarls. 
\subsection{Ftip set}\label{sec:ftipset}
The "Ftip set" process uses Tarjan's approach \cite{21} to identify the $ftip$ set of the given graph $B$. Briefly, this approach uses a modified DFS to identify cycle closing nodes while also identifying the tips in the graph; see \cite{zisis2026ultrabubble} for details. Note that as we are only considering acyclic snarls from the vg tool, the $ftip$ set will consist only of tips. In this benchmark, we are therefore only detecting the tips in the graph. This step has complexity $O(n+m)$, where $n$ and $m$ are the number of nodes and edges in the biedged graph.

\subsection{Ftip preorder points}\label{sec:ftipprepoints}
The "Ftip preorder points" process takes as input the $ftip$ set (Section~\ref{sec:ftipset}) and the preorder values of all nodes of the graph (Section~\ref{sec:LCApreordertabs}), scans all nodes of the graph in preorder order, and stores all nodes from the $ftip$ set in a list along with its preorder index.
This way, the $ftip$ set is ordered with respect to the preorder DFS traversal of $B_t$.
Since each node is checked once and each node $t \in ftip$ set is stored once, the process has complexity $O(n+F)$, where $n$ and $F$ are the number of nodes and the size of $ftip$ for the graph. Note that this complexity simplifies to $O(n)$, as \(F \le n\).

\subsection{R-L preorder intervals}
The "R-L preorder intervals" process takes as input the set of normalized snarls (Section~\ref{sec:rlnorm}) and the preorder and subtree end values of all nodes (Section~\ref{sec:LCApreordertabs}). 
Then, for each $R-L$ snarl in the input, the process stores the left and right frontier nodes, and the preorder and subtree end values of the snarl's left frontier node $sn_i$ as tuples in a bucket array indexed by the preorder value of $sn_i$. 
The process has complexity $O(n+K)$, where $n$ and $K$ are the number of nodes and snarls in the biedged graph.

\subsection{Prefix count table}
The "Prefix count table" process takes as input the $ftip$ preorder points (Section~\ref{sec:ftipprepoints}) and builds a prefix count table that for each preorder position of $B_t$ gives the number of nodes from the $ftip$ set that are located after that position in the DFS traversal. Specifically, the prefix count table is built by first creating an array indexed by the preorder positions of $B_t$ and that contains 1 in each position corresponding to a node in the $ftip$ set and 0 otherwise, and then computing the cumulative sum over the array in reverse. This process has complexity $O(n)$, where $n$ is the number of nodes in the biedged graph.

\subsection{DFS order}\label{sec:dfsorder}
The "DFS order" process performs a modified DFS in the biedged graph $B$ (instead of the BFS tree of $B$, as in Section~\ref{sec:LCApreordertabs}). 
During this DFS, each node is assigned a discovery position when it is first discovered and a finishing position which records the point that the DFS traversal leaves the node after all allowed successors have been explored. Moreover, for each node,  the highest reachable discovery index inside its DFS subtree is stored. This way, each node records when it was first visited, when its exploration finished, and the extent of its DFS subtree. While the DFS traversal it self has linear time complexity $O(n+m)$, in the current implementation, the allowed successors of each node are sorted before the DFS step, in order to keep the traversal order deterministic. This gives an overall running time of \(O(n + m + \sum_{v} d(v)\log d(v))\), where $n$, $m$, and $d(v)$ are the number of nodes, edges, and the degree of each node $v$ of the biedged graph, respectively. This simplifies to \(O(n + m\log n)\). In practice, the typical behavior is linear because the number of allowed successors of every node $v$ is relatively small.

\subsection{Nested family build}
The "Nested family build" process takes as input the frontier nodes of the $R-L$ snarls (Section~\ref{sec:rlnorm}) and the DFS information (Section~\ref{sec:dfsorder}).

First, for each snarl, it captures the DFS discovery positions of its two frontier nodes. Second, the snarls are ordered so that possible outer snarls are considered before inner ones. The parent-child relations are identified by a first passage that uses a simple DFS frontier containment. Specifically, a snarl is placed inside another when the left frontier of the outer snarl is discovered earlier and the right frontier of the inner snarl is discovered before the right frontier of the outer one. However, in some cases, the right frontier of an outer snarl may be discovered early through one branch of the graph, while a truly nested snarl is discovered later through another branch. For this reason, a fallback pass is also required. More specifically, for a candidate parent, a snarl is considered nested inside it, if its left frontier is discovered only after the DFS subtree of the parent’s right frontier has already ended, while both frontiers of the snarl still lie inside the DFS subtree of the parent’s left frontier. For each candidate parent examined in the fallback pass, the method tests the snarls that satisfy this fallback nesting condition with respect to that parent. If such a snarl has not yet been assigned to any parent, it is assigned to the current candidate parent. If it has already been assigned to another parent, but that parent is less deeply nested than the current one, then it is reassigned to the current candidate parent. Finally, the snarls that have no parent become family roots, and the process stores the parent-child relations, the depth of each snarl, and a top-down processing order inside each family, from the outer snarls to the inner ones. Regarding the complexity, the first part of the construction is $O(n+K)$, since it uses arrays over the $n$ DFS positions and processes the $K$ candidate snarls once. The fallback pass then performs additional candidate parent-child checks. If $A$ is the number of such fallback checks, the implemented running time is $O(n+K+A)$. In the worst case, $A$ can be $O(K^2)$, and thus the overall complexity becomes $O(n+K^2)$. However, in practice $A$ was much smaller and the observed behavior was close to linear.

The theoretical framework motivates a linear-time hierarchy characterization. In the present implementation, however, we used a more explicit DFS-based family-construction procedure with a fallback pass. This implementation was preferred because it is easier to validate and control within the current pipeline, while remaining faithful to the intended hierarchy logic. Although its worst-case complexity is less tight, the observed behavior in practice remained close to linear.

\begin{table}[ht!]
\centering
\caption{Subprocesses runtimes in seconds for each dataset. Note that \texttt{Rooted preorder tables} and \texttt{RMQ build} are the two parts of \texttt{LCA-preorder tables} process.}
\label{tab:pre}
\resizebox{\textwidth}{!}{%
\begin{tabular}{lcccccccccccc}
\toprule
Dataset & \shortstack{Graph\\build} & \shortstack{Snarls\\parse} & \shortstack{Snarl R--L\\norm.} & \shortstack{Root\\check} & \shortstack{Rooted\\preorder\\tables} & \shortstack{RMQ\\build} & \shortstack{FTIP\\set} & \shortstack{FTIP\\preorder\\points} & \shortstack{R--L\\preorder\\intervals} & \shortstack{Prefix\\count\\table} & \shortstack{DFS\\order} & \shortstack{Nested\\family\\build} \\
\midrule
C4-90\_VG\_rem\_nroot & 0.002259 & 0.000287 & 0.000014 & 0.000216 & 0.000089 & 0.000062 & 0.000017 & 0.000009 & 0.000016 & 0.000009 & 0.000256 & 0.000191 \\
cerevisiae.fa.gz.d1a145e.417fcdf.7493449.smooth.final\_comp11\_rem\_nroot & 0.302024 & 0.060620 & 0.020353 & 0.471956 & 0.358436 & 0.062722 & 0.031871 & 0.005388 & 0.119821 & 0.004574 & 0.502616 & 0.249742 \\
cerevisiae.fa.gz.d1a145e.417fcdf.7493449.smooth.final\_comp4\_nroot & 0.508440 & 0.073719 & 0.034992 & 0.646945 & 0.512214 & 0.146251 & 0.047253 & 0.007801 & 0.149417 & 0.006166 & 0.812958 & 0.410583 \\
cerevisiae.fa.gz.d1a145e.417fcdf.7493449.smooth.final\_comp7\_rem\_nroot & 1.747037 & 0.219589 & 0.143555 & 2.095786 & 2.129141 & 0.270054 & 0.140385 & 0.025033 & 0.801054 & 0.019249 & 2.714417 & 1.430482 \\
chr19.pan.fa.a2fb268.e820cd3.9ea71d8.smooth\_comp14\_rem\_nroot & 27.786487 & 2.878278 & 2.770625 & 34.157704 & 36.239541 & 4.595366 & 2.486461 & 0.293645 & 10.248089 & 0.277857 & 38.575769 & 19.488436 \\
chr6.C4 & 0.015783 & 0.002364 & 0.000594 & 0.013464 & 0.006973 & 0.128313 & 0.001253 & 0.000221 & 0.001588 & 0.000164 & 0.011953 & 0.004021 \\
chr6.pan.fa.a2fb268.2ff309f.1300c8a.smooth\_comp1\_rem\_nroot & 50.585821 & 4.906081 & 5.201421 & 59.454812 & 65.387433 & 8.522116 & 6.576316 & 0.754706 & 22.358571 & 0.447384 & 74.220116 & 68.997831 \\
chr6.pan.fa.a2fb268.2ff309f.1300c8a.smooth\_comp5\_nroot & 0.040581 & 0.008387 & 0.002372 & 0.048556 & 0.028008 & 0.422646 & 0.003261 & 0.000601 & 0.003137 & 0.003844 & 0.040595 & 0.014525 \\
chr6.pan.fa.a2fb268.2ff309f.1300c8a.smooth\_comp6\_rem\_nroot & 0.015108 & 0.005578 & 0.000282 & 0.012635 & 0.006121 & 0.002111 & 0.000740 & 0.000109 & 0.001275 & 0.000183 & 0.011504 & 0.004334 \\
chrM.pan.fa.6626ff2.7748b33.72587dd.smooth & 0.008571 & 0.001668 & 0.000254 & 0.011029 & 0.005292 & 0.001672 & 0.000607 & 0.000071 & 0.000904 & 0.000132 & 0.008555 & 0.003310 \\
LPA & 0.026222 & 0.007460 & 0.000932 & 0.028492 & 0.013967 & 0.004388 & 0.001806 & 0.000202 & 0.001758 & 0.000311 & 0.025149 & 0.009014 \\
lpa120 & 0.002377 & 0.001922 & 0.000031 & 0.001043 & 0.000428 & 0.000576 & 0.000058 & 0.000010 & 0.000059 & 0.000015 & 0.000775 & 0.000376 \\
MHC-57b\_VG & 0.035303 & 0.002171 & 0.000200 & 0.008232 & 0.003649 & 0.001170 & 0.000454 & 0.000066 & 0.000495 & 0.000095 & 0.006915 & 0.002309 \\
synth1 & 0.038361 & 0.016455 & 0.004475 & 0.066303 & 0.035432 & 0.009560 & 0.004335 & 0.000858 & 0.006632 & 0.000695 & 0.053493 & 0.033476 \\
synth2 & 0.929140 & 0.325011 & 0.255755 & 2.018099 & 1.391860 & 0.202751 & 0.135295 & 0.022307 & 0.230509 & 0.014940 & 10.988207 & 2.633255 \\
synth3 & 0.137992 & 0.053019 & 0.027653 & 0.268357 & 0.768812 & 0.038514 & 0.021280 & 0.003819 & 0.082872 & 0.002394 & 0.257276 & 0.255684 \\
synth4 & 0.310861 & 0.068278 & 0.038126 & 0.371289 & 0.270578 & 0.044325 & 0.026370 & 0.004558 & 0.106751 & 0.003173 & 0.349580 & 0.306860 \\
synth5 & 7.022819 & 1.445405 & 1.385201 & 9.818418 & 8.925493 & 0.989032 & 0.672348 & 0.083023 & 3.394742 & 0.072408 & 10.163069 & 8.749870 \\
synth6 & 2.956847 & 0.375953 & 0.313441 & 2.477405 & 2.543365 & 0.269731 & 0.163213 & 0.026715 & 1.132306 & 0.018031 & 2.688034 & 2.456199 \\
\bottomrule
\end{tabular}%
}
\end{table}

\section{Omitted proofs}\label{sec:omitted_proofs}
\begin{figure}[ht!]
\centering
\includegraphics[width=\textwidth]{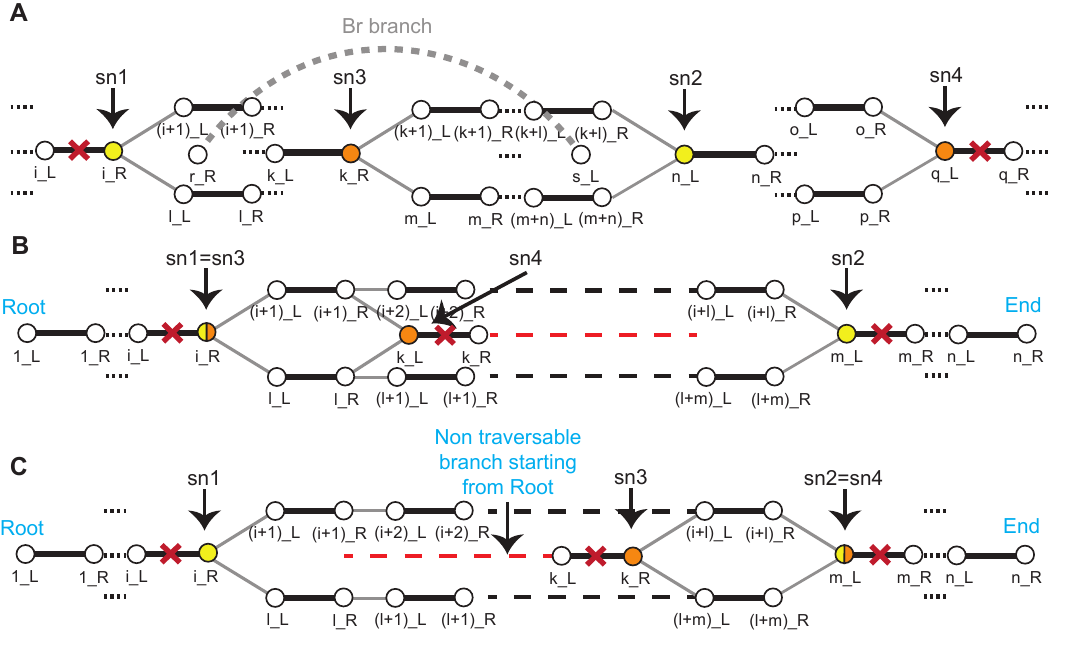}
\caption{Crossing snarls cannot co-exist (\textbf{A}), overlapping snarls having a common leftmost frontier node cannot form ultrabubbles (\textbf{B}), and there cannot be overlapping snarls that share their rightmost frontier node (\textbf{C}). (\textbf{A}) Snarls $(sn_1,sn_2)$ and $(sn_3,sn_4)$ cannot occur. First, if the branch $B_r$ does not exist then the edge $sn_3'-sn_3$ ($k\_L-k\_R$) violates the minimality criterion for snarl $(sn_1,sn_2)$. Second, if $B_r$ is present the separable criterion of snarl $(sn_3,sn_4)$ is violated since nodes $sn_3', sn_1, sn_2$ and $sn_4$ end up to the same component. (\textbf{B}) Overlapping snarls $(sn_1,sn_2)$ and $(sn_3,sn_4)$ that share their leftmost frontier node $(sn_1=sn_3)$ cannot form ultrabubbles. The subgraph represented by the dashed red line, because of the separable criterion of the snarls, has to be isolated from $sn_2$ and thus will end up to a tip or cycle. Therefore, snarl $(sn_1,sn_2)$ is not an ultrabubble as it includes the aforementioned branch. Neither is snarl $(sn_3,sn_4)$, because the sink-end of the graph belongs to its subgraph and therefore also contains a tip or a cycle. (\textbf{C}) Overlapping snarls $(sn_1,sn_2)$ and $(sn_3,sn_4)$ that share their rightmost frontier node $(sn_2=sn_4)$ cannot occur. For this to happen, the subgraph represented by the red dashed line must be isolated from the rest of the graph because of the snarl separability criterion. However, as $B$ has a root, there has to be a path from the root to $sn_3$, which is a contradiction.
}
\label{fig:comp}
\end{figure}
\begin{restatedlemma}{lem:overlap}
In the set $S_{RL}$ of $R-L$ snarls $(x,y)$ of a given rooted biedged graph $B$, no pair of crossing snarls can exist.
\end{restatedlemma}

\begin{proof}
We assume that two $R-L$ snarls of a biedged graph $B$ with frontier nodes $(sn_1, sn_2)$ and $(sn_3, sn_4)$ are crossing; that is, they have some overlap without having any common frontier nodes. Assume, without loss of generality, that $sn_3$ is located between $sn_1$ and $sn_2$, while $sn_4$ is more to the right of $sn_2$. Then, 
starting from $sn_1$ we can reach $sn_4$ by two possible ways as described below.
First, $sn_4$ can only be reached from $sn_1$ by using the edge $sn_3'$-$sn_3$. 
Since we have assumed that $sn_3,sn_4$ is a snarl and thus follows the separability criterion, the edge $sn_3'-sn_3$ is such that $(sn_1,sn_3')$ and $(sn_3,sn_4)$ are following the separability criterion of snarls. Consequently, $sn_3'-sn_3$ is a bridge edge, which violates the minimality criterion for $(sn_1, sn_2)$ (see Figure~\ref{fig:comp}\textbf{A}).
Second, $sn_4$ can be reached from $sn_1$ by using another branch $Br$ and avoiding the edge $sn_3'-sn_3$.
Then, after deleting the black edges $sn_3'-sn_3$ and $sn_4-sn_4'$, because of the branch $Br$,  
nodes $sn_3'$, $sn_1$, $sn_2$ and $sn_4$ will still belong to the same component (see Figure~\ref{fig:comp}\textbf{A}).  This contradicts the assumption of $(sn_3,sn_4)$ being a snarl, as $sn_4$ and $sn_3'$ are in the same component, which violates the separability criterion.
\end{proof}

\begin{restatedlemma}{lem:leftmost} 
Given the set $S_{RL}$ of $R-L$ snarls $(x,y)$ of a rooted biedged graph $B$, any snarls in $S_{RL}$ that share the leftmost frontier node cannot be ultrabubbles.
\end{restatedlemma}

\begin{proof}
 Assume that we have a pair of $R-L$ snarls $(sn_1, sn_2)$ and $(sn_3, sn_4)$ that share their leftmost frontier node; that is, $sn_1=sn_3$ and $sn_2 \ne sn_4$ (see Figure~\ref{fig:comp}\textbf{B}). 
 First, consider snarl $(sn_3, sn_4)$ and the components we get after deleting the black edges $sn_3'-sn_3$ (which is identical to $sn_1'-sn_1$) and $sn_4-sn_4'$. 
 Then, the subgraph that includes $sn_4'$ cannot be connected to the subgraph of the snarl $(sn_1, sn_2)$, since this will place nodes $sn4'$ and $sn_3$ in the same component, thus violating the separability criterion for the snarl ($sn_3, sn_4$). Consequently, the path including $sn_4'$ will end in a different part in the graph.

 Moreover, the snarl ($sn_3, sn_4$) must include the edge $sn_2-sn_2'$ and the subgraph connected through this edge, which will have to include the end of the graph and therefore also a cycle or a tip. Snarl ($sn_3, sn_4$) is therefore not an ultrabubble.
 Second, consider snarl $(sn_1, sn_2)$. After deleting the black edges $sn_1'-sn_1$ and $sn_2-sn_2'$, the snarl subgraph $(sn_1, sn_2)$ will contain $sn_4-sn_4'$ and thereby also contain a tip or a cycle corresponding to the end of that path. Snarl $(sn_1, sn_2)$ is therefore not an ultrabubble (Figure~\ref{fig:comp}\textbf{B}).
\end{proof}

\begin{restatedlemma}{lem:rightmost}
Given the set $S_{RL}$ of $R-L$ snarls $(x,y)$ of a rooted biedged graph $B$, no two distinct snarls in $S_{RL}$ will share their rightmost frontier node $y$. 
\end{restatedlemma}
 \begin{proof}
 Assume that we have a pair of $R-L$ snarls $(sn_1, sn_2)$ and $(sn_3, sn_4)$ that share their rightmost frontier node; that is, $sn_2=sn_4$ and $sn_1 \ne sn_3$ (see Figure~\ref{fig:comp}\textbf{C}). First, consider $(sn_3, sn_4)$.
As in the proof for Lemma~\ref{lem:leftmost}, we can conclude that after removing the edge $sn_3'$-$sn_3$, the subgraph containing $sn_3'$ cannot be connected to the $(sn_1, sn_2)$ subgraph. Moreover, $sn_3'$ cannot be connected to the root, as this would also violate the separability criterion, as $sn1$, by the requirement of $B$ being rooted, must be connected to the root. However, by the requirement of $B$ being rooted so that the whole graph can be traversed from the root, there must exist a path from the root to $sn_3'$, which means that if $(sn_1, sn_2)$ is a snarl, then $(sn_3, sn_4)$ cannot be a snarl. The same logic applies to $(sn_1, sn_2)$.
 \end{proof}

\begin{restatedlemma}{lem:nested_finish}
Given a DFS traversal of the BFS tree of a rooted biedged graph $B$, the corresponding finishing times $t_f(v)$ for each node $v \in B$, the set $K_{c}$ of compatible $R-L$ snarls of $B$, and two snarls $\{(r_i, l_i), (r_j, l_j)\} \in K_c, i\ne j$, $(r_j, l_j)$ is nested within $(r_i, l_i)$ iff $t_f(r_i) > t_f(r_j) > t_f(l_j) > t_f(l_i)$. 
\end{restatedlemma}
\begin{proof}
First, if $(r_j, l_j)$ is nested within $(r_i, l_i)$, then $t_f(r_i) > t_f(r_j) > t_f(l_j)$ \cite{zisis2026ultrabubble}. This is because both $r_j$ and $l_j$ must be proper descendants of $r_i$ in the DFS forest, and because $l_j$ must be a proper descendant of $r_j$. Second, as $l_i$ must be a proper descendant of $r_i$, $t_f(r_i) > t_f(l_i)$. Third, as $(r_i, l_i)$ and $(r_j, l_j)$ are compatible, by Lemma~\ref{lem:overlap}, $l_i$ cannot be located within $(r_j, l_j)$, so either $t_f(l_i) > t_f(r_j)$ or $t_f(l_j) > t_f(l_i)$. 
Assume that $(r_j, l_j)$ is not nested within $(r_i, l_i)$. Since we already have that $t_f(r_i) > t_f(r_j) > t_f(l_j)$, this means that either $(r_j, l_j)$ is a proper descendant of $(r_i, l_i)$ in the DFS forest or $(r_i, l_i)$ is discovered after $(r_j, l_j)$ in the DFS. In both cases, $t_f(l_i) > t_f(r_j)$. Now, assume that $(r_j, l_j)$ is nested within $(r_i, l_i)$ and that $t_f(l_i) > t_f(r_j)$. This means that the DFS finishes the subtree containing $(r_j, l_j)$ before finishing $l_j$, which can only happen if there is no path from $r_j$ to $l_i$. However, if there is no path from $r_j$ to $l_i$, there is also no path from $l_j$ to $l_i$, which means that if $(r_j, l_j)$ is a snarl, it can be completely separated from $(r_i, l_i)$. Consequently, either $(r_i, l_i)$ is not minimal (and therefore no snarl) or $(r_j, l_j)$ is not nested within $(r_i, l_i)$, both of which contradict the assumptions. Hence, $(r_j, l_j)$ can only be nested within $(r_i, l_i)$ if $t_f(l_j) > t_f(l_i)$, which, combined with the first three observations, gives the inequality.
\end{proof}

\end{document}